\begin{document}

\title{Budget-aware Online Task Assignment in Spatial Crowdsourcing}
\titlerunning{Online Matching Algorithm}  
%
\author{Jia-Xu Liu\inst{1,2} \and Ke Xu\inst{1} }
\authorrunning{Jia-Xu Liu et al.}   
%
%
\institute{State Key Laboratory of Software Development Environment, Beihang University, Beijing, China \and
Software College, Liaoning Technical University, Huludao, China
\email{liujiaxu@buaa.edu.cn, kexu999@gmail.com}}

\maketitle              
\begin{abstract}
The prevalence of mobile internet techniques  stimulates the emergence of various spatial crowdsourcing applications. 
Certain of the applications serve for requesters, budget providers,  who submit a batch of tasks and a fixed budget to platform with the desire to search suitable workers to complete the tasks in maximum quantity.
Platform lays stress on optimizing assignment strategies on seeking less budget-consumed worker-task pairs to meet requesters' demands. 
Existing research on the task assignment with budget constraint mostly focuses on static offline scenarios, where the spatiotemporal information of all workers and tasks is known in advance. 
However, workers usually appear dynamically on real spatial crowdsourcing platforms, where existing solutions can hardly handle it. 
In this paper, we formally define a novel problem \underline{B}udget-aware \underline{O}nline task \underline{A}ssignment(BOA) in spatial crowdsourcing applications. 
BOA aims to maximize the number of assigned worker-task pairs under a budget constraint where workers appear dynamically on platforms. 
To address the BOA problem, we first propose an efficient threshold-based greedy algorithm Greedy-RT which utilizes a random generated threshold to prune the pairs with large travel cost. 
Greedy-RT performs well in adversary model when compared with simple greedy algorithm, but it is unstable in random model for its randomly generated threshold may produce poor quality in matching size. 
We then propose a revised algorithm Greedy-OT which could learn approximately optimal threshold from historical data, and consequently improves matching size significantly in both models. 
Finally, we verify the effectiveness and efficiency of the proposed methods through extensive experiments on real and synthetic datasets.
\end{abstract}

\section{Introduction}
The pervasion of mobile phones has caused the rapid emergence of various of spatial crowdsourcing applications such as Uber, Gigwalk, and Waze etc. in the last decade.
Part of these applications feature with budget constraint and usually compatible with the occasions like data acquisition\cite{To2016Real} and sampling survey\cite{Restuccia2014FIDES}.
In these real applications, a batch of tasks are released by a single requester who as well provides a fixed budget for rewarding workers. Requester expects that platform could expend the fund efficiently so that more tasks are able to be assigned even if the spatiotemporal information of workers is unknown for platform util they appear.
That is, platform is required to perform online task assignment to maximize the number of total assigned pairs under a fixed budget.

Previous studies on the problem of online task assignment can generally be divided into two categories: non-guidance and guidance.
The majority studies lay in the scope of non-guidance matching. 
Multiple-armed bandit\cite{To2016Real,Hassan2014A} and random-threshold greedy\cite{Tong2016Onlinemobil,Ting2015Near} are prevalently adopted to solve the problem. 
However, neither of them could approach optimal solution since the global spatiotemporal information cannot be acquired in advance to guide online matching. 
Tong et al. proposed a guided matching solution for the problem with prior knowledge which could be estimated from the historical traces of workers and tasks\cite{Tong2017Flexible}. 
Higher performance may be achieved in guidance on the condition that the real spatial-temporal distribution of works and tasks is highly similar with their historical distribution.
Once the condition is broken when emergent events occurs, a great deviation will be produced between historical and real data. 
As a result, much more inferior matching results are generated and even worse than the non-guidance matching models in performance.
Unfortunately, the deviation between real-time and historical distribution in the real world is frequently nonnegligible over a short time, even if the spatiotemporal distribution of workers keeps periodic similarity in the magnitude of long term. 
Thus, many unsuitable matches will emerge if fine-grained historical information is directly utilized to guid real-time task assignment.
Besides the hardness in usage of historical data, another obstacle is that existing online assignment algorithms have defects in improving matching size.
For example, greedy is a simple and efficient algorithm in solving online task assignment problem\cite{Tong2016Online}. However, its performance is sensitive to the arrival order of workers, especially fragile in adversary model.  We go through the following toy example to illustrate it.
\begin{figure}
\centering
\subfigure[Initial locations]{\includegraphics[width=0.3\textwidth]{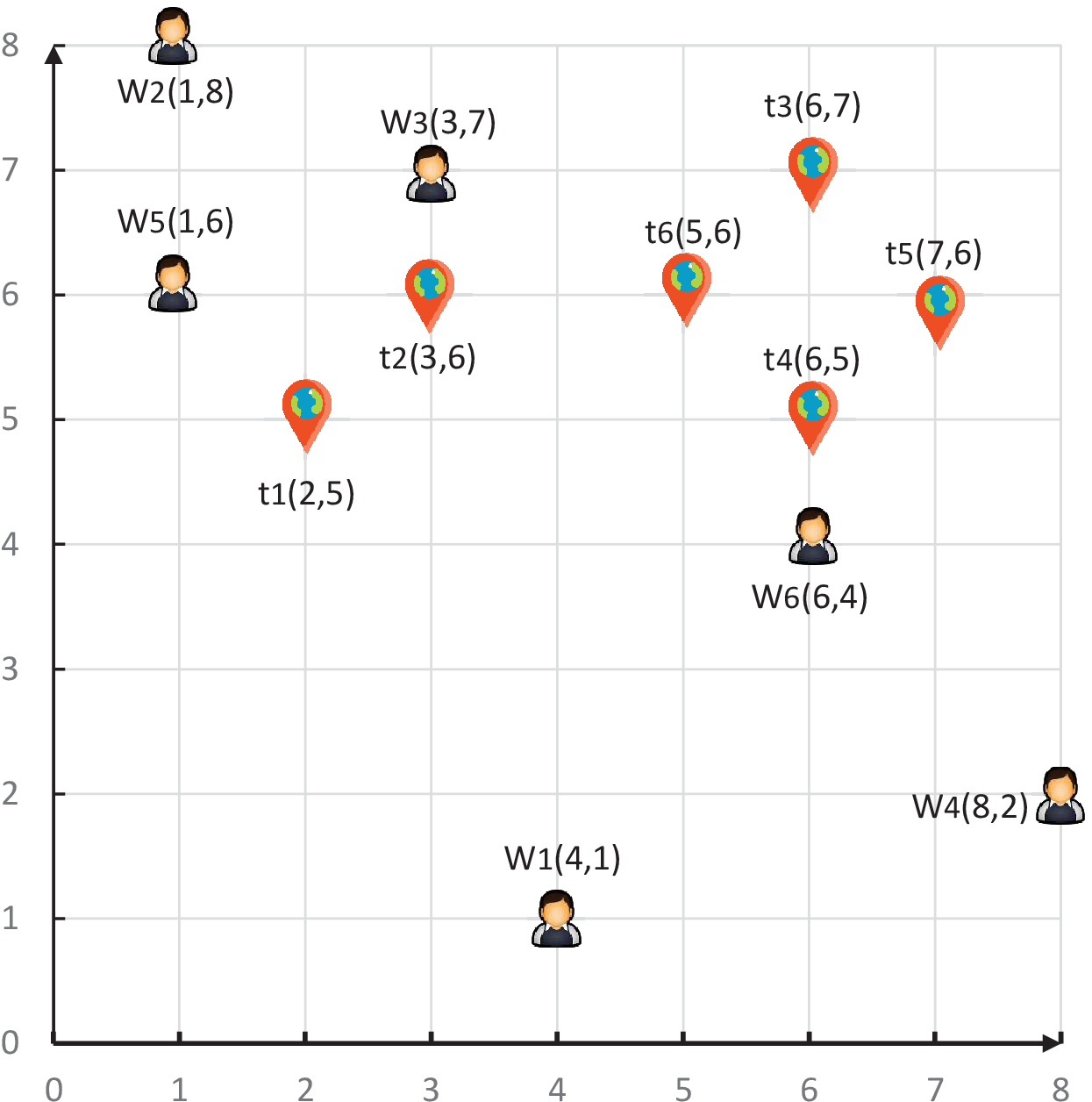}\label{fg:Init}}
\subfigure[OPT result]{\includegraphics[width=0.3\textwidth]{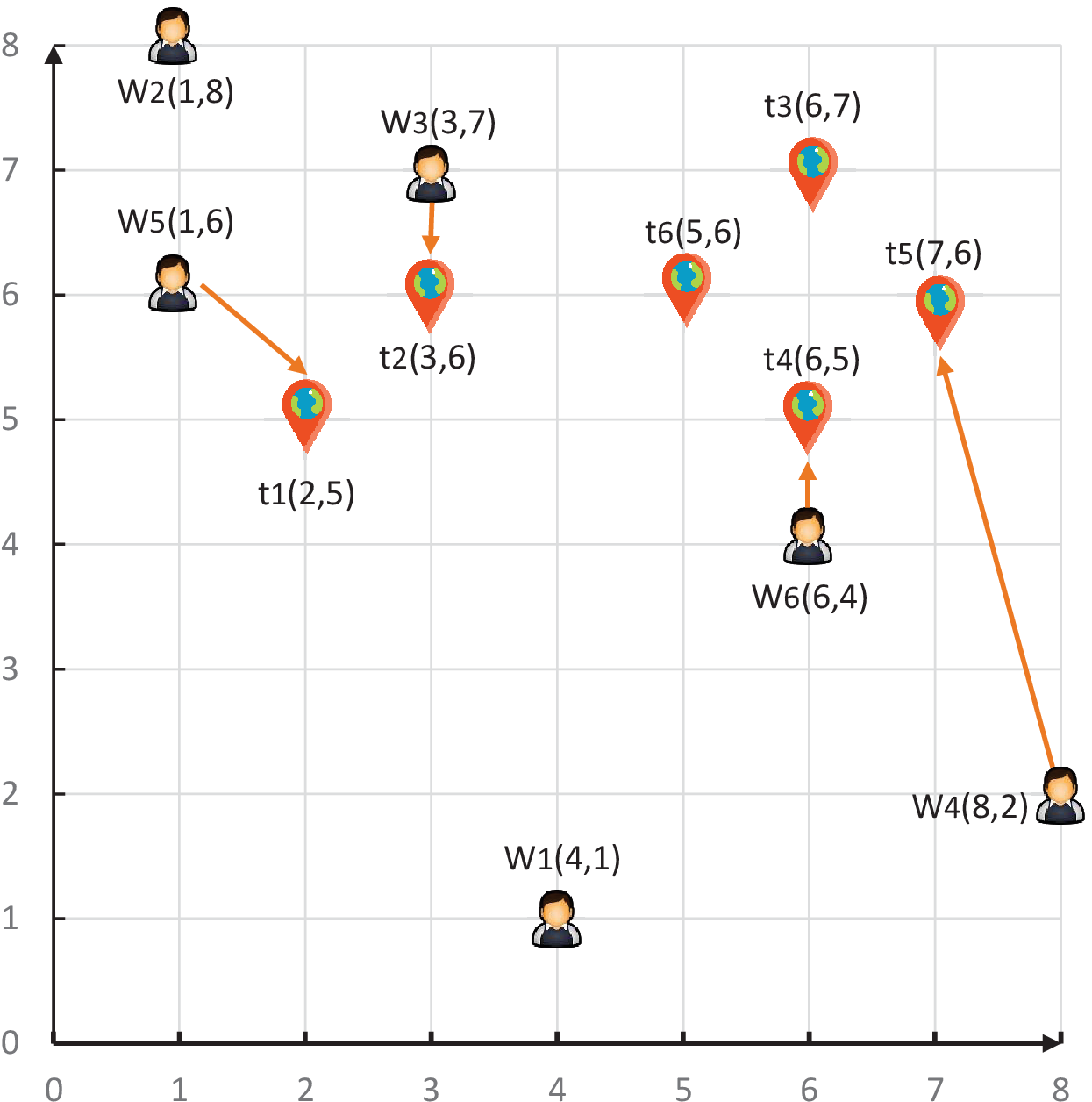}\label{fg:OPT}}
\subfigure[Greedy result]{\includegraphics[width=0.3\textwidth]{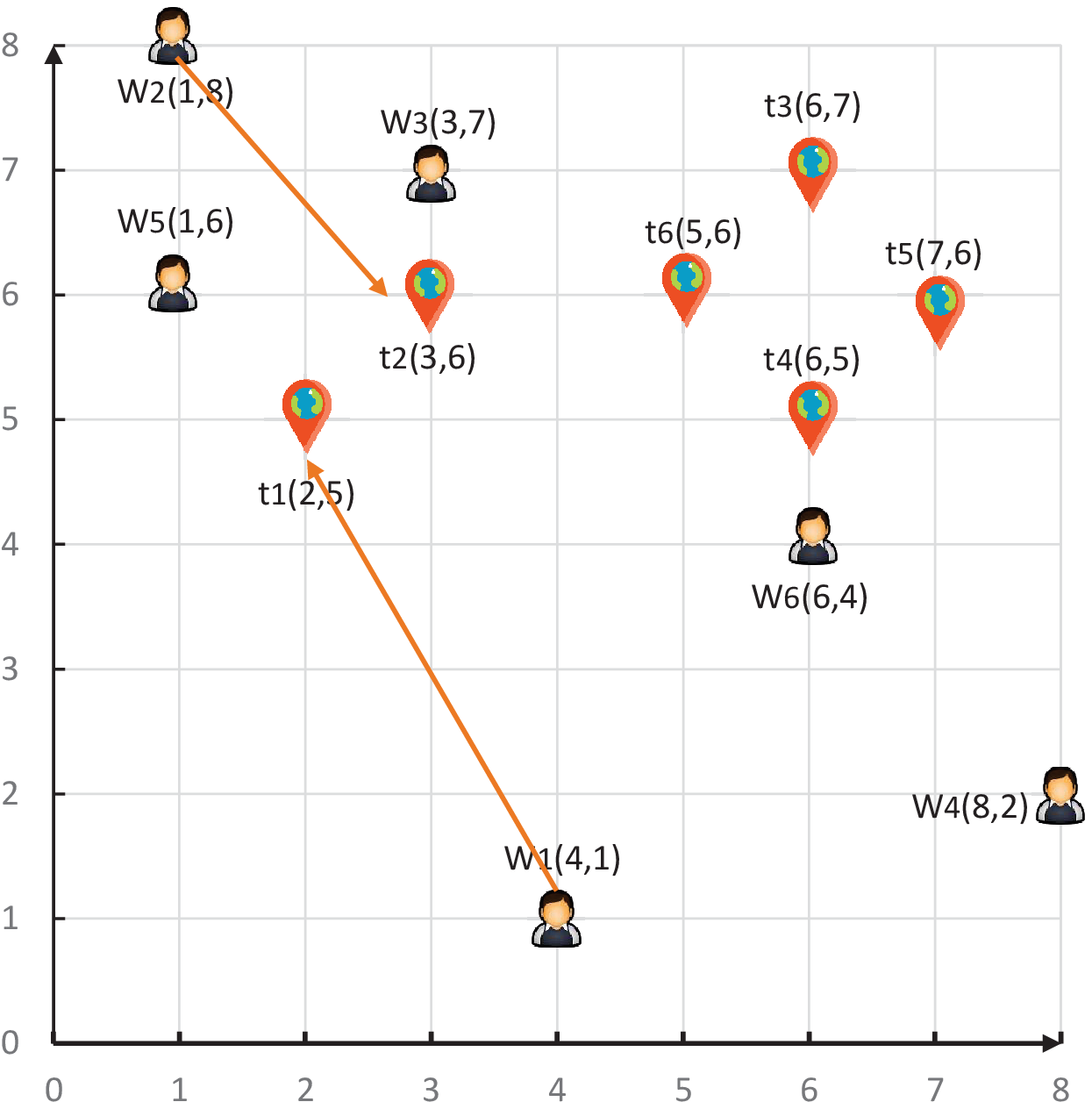}\label{fg:Greedy}}
\caption{The matching results of offline optimal algorithm and simple greedy algorithm }
\label{fg:example}
\end{figure}
\begin{example}\label{ex:1}
David submits six tasks to platform with the desire that platform helps him assign as many workers as possible to his tasks while the payment does not exceed the budget 10. 
All the tasks share the same release time 0 and expired time 10 while the appearance of workers may differ in time. 
Here, we deploy that any worker $w_i$ appears at time $i$.
Platform performs one-to-one matching between workers and tasks, where each task is completed by one worker and each worker is assigned to one task at most.
The initial locations of the tasks and workers are labeled in the 2D space($X$,$Y$) in Fig. \ref{fg:Init}.
We assume the reward for workers equals to their travel cost, then the total travel cost of assigned workers can not beyond 10.
The optimal matching is shown in Fig. \ref{fg:OPT} while the result of simple greedy is shown in Fig. \ref{fg:Greedy}.
Here, the solution of simple greedy is $\{(w_1,t_1),(w_2,t_2)\}$ whose matching size is merely 2 and used budget is 10. 
There are two reasons for the unsatisfactory result: 1) bad arrival order of workers, which causes the budget is exhausted by those early arrival workers with large travel cost. 2) simple greedy has no prior knowledge on the optimal matching which can be estimated from historical data and be used to guid the real-time task assignment.
\end{example}
Motivated by the above example, we first formulate a novel problem called Budget-aware Online task Assignment(BOA). 
To address the BOA problem, we propose two greedy variants to amend the two defects of the simple greedy mentioned above. 
In order to economize budget, the first variant utilizes a random generated threshold to filter out those serious budget-expended matchings. 
Random thresholds cannot guarantee the quality of results since small thresholds maybe filter out those eligible matchings. 
Another variant ameliorates the former by learning the near optimal threshold from historical data.
The main contribution of this work are summarized as follows.
\begin{itemize}
\item Inspired by certain emerging spatial crowdsourcing applications, we propose and formulate the BOA problem which is a problem of online task assignment in real-time spatial data with a budget constraint. 
\item To address the BOA problem, we propose a greedy variant, called Greedy-RT, which filters out those bad matchings with large expensive cost by a random generated threshold, whose performance is superior than simple greedy algorithm in adversary model.
\item We then propose another variant, called Greedy-OT, which extracts near optimal threshold from the offline optimal solution based on the spatiotemporal information of historical workers and released tasks. Greedy-OT improves matching size significantly than Greedy-RT and simple greedy algorithm both in random model and in adversary model.
\item We verify the effectiveness and efficiency of our algorithms on both synthetic dataset and large-scale Uber pickups dataset.
\end{itemize}
In the rest of this paper, we formulate the BOA problem in Section 2 and provide its offline optimal solution in Section 3.1. We present a random generated threshold greedy algorithm and analyze its competitive ratio in Section 3.2, and then present a revised greedy algorithm based on near optimal threshold and analyze its competitive ratio in Section 3.3. Section 4 presents the performance evaluations, Section 5 reviews related worker and Section 6 concludes this paper.
\section{Problem Statement}
In this section, we first introduce the basic concepts, and then formally define the Budget-aware Online task Assignment (BOA) problem.
\begin{definition}[Worker]
A worker is denoted by $w=<\bm{l}_w, b_w, e_w, v_w>$. $\bm{l}_w$ is the location of $w$ in a 2D space. $b_w$ and $e_w$ is the arrival time and leaving time on platform. $v_w$ is the velocity of $w$. 
\end{definition}
It is notable that $b_w$ equals with $e_w$ in our BOA problem, that is, when a new worker $w$ arrives, platform will instantly make an assignment determination for $w$. Once $w$ fails in matching, platform will not take him into consideration in the subsequent assignments.
\begin{definition}[Task]
A task is denoted by $t=<\bm{l}_t, r_t, d_t>$. $\bm{l}_t$ is the location of $t$ in a 2D space. $r_w$ and $d_t$ is the release time and deadline of $t$. 
\end{definition}
\begin{definition}[Task requester]
A task requester is denoted by $r=<T, B>$. ${T}=\{t_1,...,t_n\}$ is a batch of tasks released by a single requester. $B$ is the budget that requester supplies to reward workers who can complete any task of $T$. 
\end{definition}
\begin{definition}[Travel Cost]
The travel cost, denoted by $cost(w, t)$, is the distance cost for $w$ to arrive at the location of $t$, that is, the distance between $\bm{l}_w$ and $\bm{l}_t$.
\end{definition}
\begin{definition}[BOA Problem]
Given a task requester $r$, and a set of workers $W$ where workers dynamically appear on platform. 
The goal of BOA is to find an assignment scheme $M$ between $W$ and $r.T$ to maximize the number of assigned pairs.
That is, $\max Sum(M)=\sum_{w\in W, t\in r.T}I(w, t)$, where $I(w, t)=1$ if the pair $(w, t)$ is matched in the assignment $M$, and otherwise $I(w, t) = 0$, such that the following constraints are satisfied.
\begin{itemize}
\item Budget constraint. 
We assume the reward that platform supplies for a worker is proportion with his travel cost at the ratio of 1,
then the total rewards paid for workers in $M$ is less than $B$. (i.e.,$\sum_{(w,t)\in M}{cost(w,t)}\le B$).
\item Deadline constraint. 
For any worker-task pair $(w, t)$, $w$ should be able to arrive at the location of $t$ before its deadline. (i.e.,$b_w+cost(w,t)/v_w\le d_t$).
\item Invariable constraint. 
Once a task $t$ is assigned to a worker $w$, the assignment of $(w, t)$ cannot be revoked.
\end{itemize}
\end{definition}
\section{Solutions for BOA}
\subsection{Offline Optimal Solution}
In this section, we introduce the optimal solution for BOA problem, which can be solved in offline scenarios, 
where platform has acquired the entire and determined information of workers and tasks about their locations and deadline. 
\begin{theorem}\label{th1}
BOA problem is reducible to the minimum-cost maximum-flow problem.
\end{theorem}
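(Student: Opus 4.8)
The plan is to build a flow network whose integral flows correspond exactly to feasible assignments, so that an optimal BOA assignment can be read off from a min-cost max-flow computation. First I would construct a directed graph $G=(V,E)$ with $V=\{s\}\cup W\cup r.T\cup\{\tau\}$, where $s$ is a source and $\tau$ a sink. For every worker $w\in W$ add an arc $(s,w)$ with capacity $1$ and cost $0$; for every task $t\in r.T$ add an arc $(t,\tau)$ with capacity $1$ and cost $0$; and for every pair $(w,t)$ satisfying the deadline constraint $b_w+cost(w,t)/v_w\le d_t$ add an arc $(w,t)$ with capacity $1$ and cost $cost(w,t)$. Pairs violating the deadline are simply omitted, so that constraint is encoded structurally; since the offline setting knows all data in advance, the invariable constraint is vacuous here.

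Next I would establish the correspondence between integral flows and assignments. Because every arc capacity is integral, an integral maximum flow exists, and the unit capacities on the source and sink arcs force any integral $s$-$\tau$ flow of value $k$ to saturate exactly $k$ worker arcs and $k$ task arcs; hence each worker and each task is used at most once, and the flow induces a one-to-one matching $M$ of size $k$. Conversely every feasible assignment of size $k$ yields such a flow. Under this bijection the value of the flow equals $Sum(M)$, while the total flow cost equals $\sum_{(w,t)\in M}cost(w,t)$, so the budget constraint translates precisely into a bound on the cost of the flow.

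The main subtlety — and the step I expect to be the crux — is that BOA maximizes matching size subject to a cost bound, whereas plain min-cost max-flow minimizes cost at the largest achievable flow value. I would resolve this by invoking the successive-shortest-path characterization: augmenting along a cheapest residual $s$-$\tau$ path at each step yields, after $k$ augmentations, a flow of value $k$ of the minimum possible cost, and the per-augmentation costs are nondecreasing in $k$. Consequently the minimum cost $C(k)$ of matching $k$ pairs is a nondecreasing (indeed convex) function of $k$, and the optimal BOA objective is the largest $k$ with $C(k)\le B$. This value is obtained simply by running the min-cost flow augmentations and halting at the last one that keeps the accumulated cost within $B$, which exhibits BOA as a reduction to min-cost max-flow and completes the argument.
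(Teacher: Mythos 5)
Your flow-network construction is the same as the paper's: a source and sink, unit-capacity zero-cost arcs from the source to each worker and from each task to the sink, and unit-capacity arcs $w\to t$ of cost $cost(w,t)$ restricted to deadline-feasible pairs. Where you genuinely diverge is in how the budget bound is folded into the reduction, and your route is the more careful one. The paper's proof stops at the construction and simply asserts that "given a fixed budget, we may obtain the maximum matching in quantity by running the min-cost max-flow algorithm"; operationally (Algorithm 1) it runs min-cost max-flow once, sorts the pairs of that single maximum matching by cost, and keeps the prefix fitting the budget. You instead invoke the successive-shortest-path characterization: the minimum cost $C(k)$ of a flow of value $k$ is produced exactly at the $k$-th augmentation, $C(k)$ is nondecreasing (convex), and the BOA optimum is $\max\{k : C(k)\le B\}$, obtained by halting augmentations when the budget would be exceeded. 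This is not a cosmetic difference, because truncating the final max-flow matching can be strictly suboptimal: take workers $w_1,w_2$ and tasks $t_1,t_2$ with $cost(w_1,t_1)=5$, $cost(w_1,t_2)=1$, $cost(w_2,t_2)=5$, the pair $(w_2,t_1)$ deadline-infeasible, and $B=1$; the unique maximum matching is $\{(w_1,t_1),(w_2,t_2)\}$, whose within-budget prefix is empty, yet $\{(w_1,t_2)\}$ is a feasible assignment of size $1$. Your halted-augmentation argument finds exactly that pair, since intermediate min-cost flows may reroute flow rather than being subsets of the final matching. In short, you reuse the paper's graph but replace its unsupported truncation step with a correct optimality argument, which is precisely what the theorem needs.
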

\begin{proof}
Given $W=\{w_1, w_2, ...\}$ as the set of online workers, and $T =\{t_1, t_2,...\}$as the set of tasks to be assigned. 
Let $G =(V, E)$ be the flow network graph with $V$ as the set of vertices, and $E$ as the set of edges. 
The set $V$ contains $|W|+|T|+2$ vertices, that comprises worker vertices denoted by $V_{w_i},i\in[1,|W|]$, task vertices marked by $V_{t_j},j\in[0,|T|]$ and two additional vertices, source vertex $V_s$ and destination vertex $V_e$ respectively. 
The set $E$ contains $|W|+|T|+|W|\cdot|T|$ edges, each of which has a capacity of 1.
We associate the cost of edge from vertex $V_{w_i}$ to $V_{t_j}$ with their travel distance, and the one of other edges with 0. 
Thus, given a fixed budget, we may obtain the maximum matching in quantity by running the min-cost max-flow algorithm in $G$.   
\end{proof}
Next we will describe the steps to obtain the optimal solution of BOA problem. 
According to Theorem \ref{th1}, we first utilize the offline spatial-temporal information of workers and tasks to build bipartition graph, and then apply traditional min-cost max-flow algorithm to get the optimal matching $\hat{M}^*$.
Finally, we sort the matching pairs in $\hat{M}^*$ in ascending order, and then choose them one by one until the total cost beyonds the budget. 
The whole procedure is depicted in Algorithm \ref{al:opt}.
\begin{algorithm}[!h]
\caption{Optimal Solution}\label{al:opt}
\begin{algorithmic}[1]
\REQUIRE $W$,$T$,$B$
\ENSURE matching scheme $\hat{M}^*$
\STATE $\hat{M}^*\gets \emptyset$, $c\gets0$
\STATE create source vertex $s$, sink vertex $e$
\FORALL {worker node $w\in W$}
\STATE add\_edge($s$,$w$,1,0)
\ENDFOR
\FORALL {task node $t\in T$}
\STATE add\_edge($t$,$e$,1,0)
\ENDFOR
\FORALL {task node $w\in W$}
  \FORALL {task node $t\in T$}
    \IF{$b_w+cost(w,t)/v_w\le d_t$}
      \STATE add\_edge($w$,$t$,1,$cost(w, t)$);
    \ENDIF
  \ENDFOR
\ENDFOR
\STATE $\hat{M}\gets$ Min-Cost\_Max-Flow($s$,$e$)
\STATE sort worker-task pairs in $\hat{M}$ by cost in ascending order
\FORALL {sorted worker-task pair $(w,t) \in \hat{M}$}
  \IF{$c+cost(w,t)\le B$}
    \STATE insert $(w,t)$ into $M^*$
    \STATE $c\gets c+cost(w,t)$
  \ENDIF
\ENDFOR
\RETURN $\hat{M}^*$
\end{algorithmic}
\end{algorithm}
\begin{figure}
\centering
\subfigure[Build graph]{\includegraphics[width=0.3\textwidth]{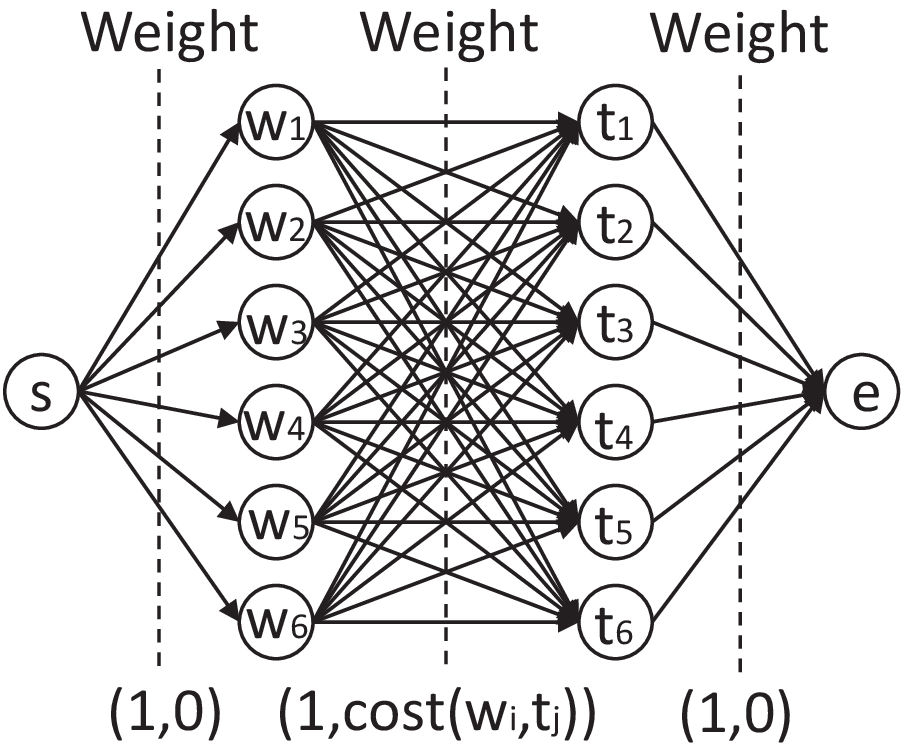}\label{fg:opt_graph1}}
\subfigure[Min-cost max-flow]{\includegraphics[width=0.3\textwidth]{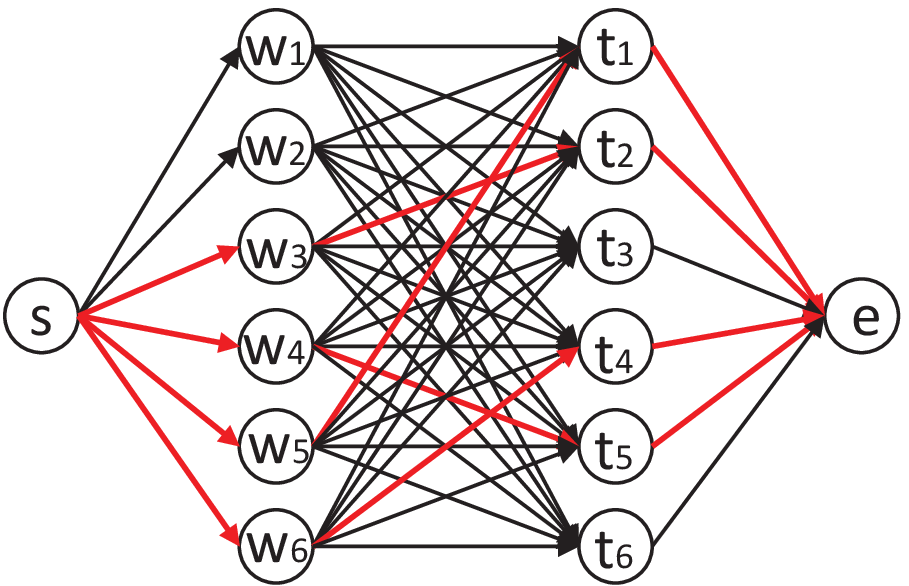}\label{fg:opt_graph2}}
\subfigure[Matching pairs]{\includegraphics[width=0.3\textwidth]{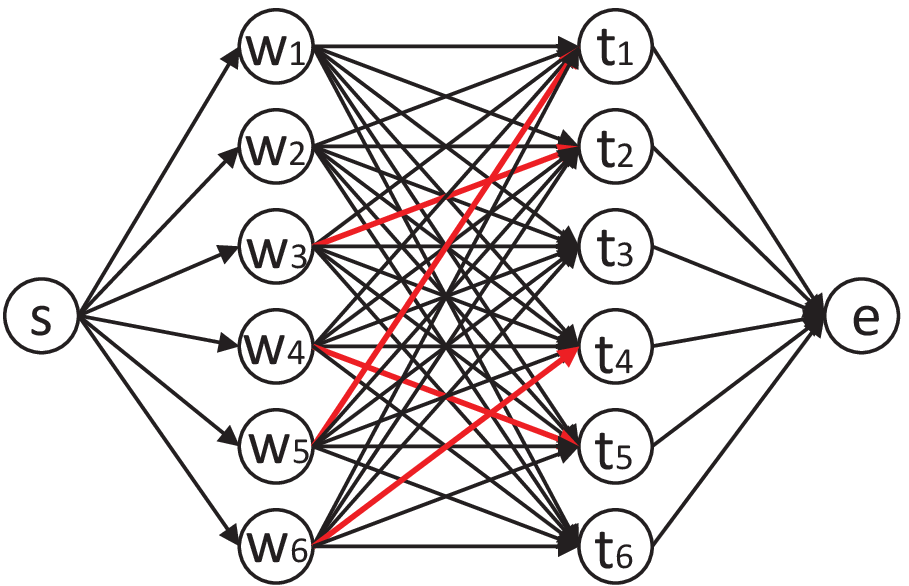}\label{fg:opt_graph3}}
\caption{Illustrated example of OPT algorithm}
\label{fg:opt}
\end{figure}
\begin{example}
Backing to our running example in Example \ref{ex:1}, As shown in Fig. \ref{fg:opt}, four flows are picked out by running the min-cost max-flow algorithm, which are\\
$f_1: s->w_3->t_2->e$ with cost 1,\\
$f_2: s->w_4->t_5->e$ with cost 5,\\
$f_3: s->w_5->t_1->e$ with cost 2,\\
$f_4: s->w_6->t_4->e$ with cost 1.\\
The total used budget is 9. Worker and task vertices are eventually extracted out from each flow, and then form the final pair set $\hat{M}^*=\{(w_3,t_2), (w_4,t_5), (w_5,t_1), (w_6,t_4)\}$.
\end{example}
\subsection{Greedy-RT Algorithm}
Greedy is a simple and efficient method for most online task matching problems, but its performance is susceptible to the order of workers' appearance. 
Its competitive ratio achieves the worst $\frac{1}{2^{min\{|T|,|W|\}}-1}$ \cite{Tong2016Online} when workers' appearance follows adversarial model. 
In order to alleviate the impaction of the order, random-threshold greedy(Greedy-RT)\cite{Ting2015Near} is competent than simple greedy methods for its threshold, which is randomly generated, could feasibly filter out those extremely bad matching pairs. 
Specifically, Greedy-RT first produces a random threshold $\tau$, and then the pairs whose travel cost is greater than $\tau$ are denied, so that the abused budget which caused by early workers in adversary model is restrained. 
In this section, we utilize the random-threshold greedy to solve BOA problem. 
\begin{algorithm}[!h]
\caption{Greedy-RT Algorithm}
\label{al:greedy-rt}
\begin{algorithmic}[1]
\REQUIRE $B$, $c_{max}$
\ENSURE matching pattern $M$
\STATE $M\gets \emptyset$,$c\gets 0$
\STATE choose $\kappa$ randomly from the set \{0,1,...,$\lceil ln(c_{max}+1)\rceil$\} with the probability $Pr(\kappa=i)=\frac{1}{|\lceil ln(c_{max}+1)\rceil|}$
\STATE $\tau\gets e^\kappa$
\FORALL {new arrival worker $w$}
    \STATE $T^\prime\gets$\{$\forall t\vert b_w+cost(w,t)/v_w\le d_t\land cost(w,t)\le \tau\land c+cost(w,t)\le B$ \}
    \IF{$T^\prime\neq \emptyset$}
        \STATE $t=\min\limits_{t\in T^\prime}cost(w,t)$
        \STATE $M\gets M\cup$($w$,$t$)
        \STATE $c\gets c+cost(w,t)$
    \ENDIF
\ENDFOR
\RETURN $M$
\end{algorithmic}
\end{algorithm}

The whole procedure of Greedy-RT is illustrated in Algorithm \ref{al:greedy-rt}.
 In line 1, the result set $M$ and the used budget $c$ are assigned with initial values. 
 In lines 2-3, Greedy-RT randomly chooses a threshold $e^\kappa$ on travel cost according to the estimated maximum cost $c_{max}$, which can be learned from the scope that workers and tasks appear.
 In lines 4-5, when a worker $w$ arrives, Greedy-RT filters a task subset $T^\prime$ where each task in $T^\prime$ satisfies three conditions:1)$w$ could arrive before $t$'s deadline; 2)the travel cost between $w$ and $t$ is little than the specific threshold $e^\kappa$; 3)the used budget $c$ after adding the cost $d(w,t)$ cannot beyond the total budget $B$. 
 In lines 6-9, if $T^\prime$ is not empty, Greedy-RT chooses the nearest task from $T^\prime$, adds it into $M$ and updates the used budget $c$.
 In line 10, the algorithm returns the final matching scheme when all the workers have already appeared or budget has been used up.
\begin{example}
Backing to our running example in Example \ref{ex:1}, workers and tasks appear in the 8$\times$8 square area where the maximum travel cost does not beyond the manhattan distance 16. 
According to the upper bound $\lceil ln(c_{max}+1)\rceil$, Greedy-RT divides the whole travel cost into four grades whose associated thresholds are $e^0$, $e^1$, $e^2$ and $e^3$ respectively. 
The matching schemes based on various thresholds are listed as follows:\\
$e^0$:$\{(w_3,t_2),(w_6,t_4)\}$, matching size:2, used budget:2\\
$e^1$:$\{(w_3,t_2),(w_5,t_1),(w_6,t_4)\}$, matching size:3, used budget:4\\
$e^2$:$\{(w_1,t_1),(w_2,t_2)\}$, matching size:2, used budget:10\\
$e^3$:$\{(w_1,t_1),(w_2,t_2)\}$, matching size:2, used budget:10.\\
Greedy-RT randomly chooses one threshold, so its expectation of matching size is $E(Greedy-RT)=\frac{2}{4}+\frac{3}{4}+\frac{2}{4}+\frac{2}{4}=2.5$ in this example, which outperforms than the simple greedy algorithm.
\end{example}
We next analyze the competitive ratio of Greedy-RT. Let $\mathcal{O}$ be the optimal matching with the limited budget $B$. 
Define $\mathcal{O}(e^i, e^{i+1}]=\{(w,t)\in \mathcal{O}|cost(w,t)\in (e^i, e^{i+1}]\}$ to be the subset of pairs in $\mathcal{O}$ whose cost are in the interval ($e^i$,$e^{i+1}$]. For any $i \ge 0$, let $M_{\le e^i}$ denote the matching scheme returned by Greedy-RT whose threshold $\tau$ equals $e^i$, or equivalently, when $\kappa = i$. 
\begin{lemma}\label{lm1}
For any $i\ge 0$, 
$|M_{\le e^i}|\ge 
\begin{cases}
|\mathcal{O}(e^{i-1},e^i]|&i\ge1 \\
|\mathcal{O}(0,e^0]|&i=0
\end{cases}
$.
\end{lemma}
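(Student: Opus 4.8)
The plan is to prove the bound by exhibiting an injection from the optimal slice $A := \mathcal{O}(e^{i-1},e^i]$ (treating $i\ge 1$; the $i=0$ case is identical with $(0,e^0]$ in place of $(e^{i-1},e^i]$) into the greedy matching $M_{\le e^i}$. First I would isolate the feature of the slice that drives everything: every pair $(w,t)\in A$ has $cost(w,t)\le e^i$ and, being inherited from $\mathcal{O}$, already satisfies the deadline test $b_w+cost(w,t)/v_w\le d_t$. Hence at the instant its worker $w$ arrives during the run of Greedy-RT with $\tau=e^i$, the task $t$ fails neither the threshold nor the deadline condition in line 5; the only tests it can fail are that $t$ has already been taken or that the budget guard $c+cost(w,t)\le B$ is violated. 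This is the routine feasibility bookkeeping.

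Next I would define $\phi:A\to M_{\le e^i}$ by cases on what greedy does with $w$. If $w$ is matched, send $(w,t)$ to the unique greedy pair incident to $w$. If $w$ is left unmatched, then (deferring the budget case for a moment) the only reason $t\notin T'$ is that $t$ was already claimed by an earlier worker, so send $(w,t)$ to the unique greedy pair incident to that $t$. Since $A\subseteq\mathcal{O}$ is itself a matching, distinct slice pairs use distinct workers and distinct tasks, and I would use this to argue that each of the two ``coordinates'' of $\phi$ is separately injective.

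The hard part is exactly the injectivity \emph{across} the two cases, together with the deferred budget exclusion. The collision to rule out is a slice pair $(w_1,t_1)\in A$ whose worker greedy matches to some task $t_2$ while a second slice pair $(w_2,t_2)\in A$ has its worker $w_2$ left unmatched: then both would be sent to the greedy pair $(w_1,t_2)$. Because line 7 assigns each worker its \emph{minimum-cost} feasible task, this happens precisely when greedy diverts $w_1$ onto a cheaper slice task, so the crux is to show such diversions cannot produce a net deficit; I expect to need an exchange (augmenting-path) argument that reroutes greedy's choices along the symmetric difference $M_{\le e^i}\,\triangle\,A$ to repair every collision. For the budget case I would try to use that, for $i\ge 1$, every slice pair costs more than $e^{i-1}$, so $|A|\,e^{i-1}<\sum_{(w,t)\in A}cost(w,t)\le B$ since $A\subseteq\mathcal{O}$, and combine this with the facts that each greedy pair costs at most $e^i$ and that greedy always prefers the cheapest feasible task, in order to conclude that budget is never the binding reason a slice worker goes unmatched before $|M_{\le e^i}|$ reaches $|A|$. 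Pinning down these two points rigorously, the collision repair and the budget exclusion, is where the real work lies, and I would watch them closely, since a naive charging only assigns at most two slice pairs to each greedy pair and would therefore yield a factor $\tfrac12$ rather than the stated factor $1$.
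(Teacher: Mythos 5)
Your plan is honest about where its difficulty lies, and that difficulty turns out to be fatal: the collision-repair step you defer cannot be carried out, because the inequality you are trying to prove is false. Take $i=0$, so $\tau=e^0=1$. Place four points on a line: $w_2$ at $-0.6$, $t_1$ at $0$, $w_1$ at $0.5$, $t_2$ at $1.4$, with all deadlines non-binding and budget $B=1.5$. The unique optimal solution is $\mathcal{O}=\{(w_1,t_2),(w_2,t_1)\}$ with costs $0.9$ and $0.6$ (the other perfect matching costs $2.5>B$), so $|\mathcal{O}(0,e^0]|=2$. But Greedy-RT with $\tau=1$ on arrival order $w_1,w_2$ matches $w_1$ to its nearest task $t_1$ (cost $0.5$, line 7 of the algorithm forces this), and then $w_2$'s only remaining task $t_2$ has cost $2>\tau$, so $w_2$ is rejected; hence $|M_{\le e^0}|=1<2$. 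This is exactly the diversion in your collision case: greedy reroutes $w_1$ onto the cheaper task that the optimal slice reserves for $w_2$, and no exchange argument along $M_{\le e^0}\,\triangle\,\mathcal{O}(0,e^0]$ can repair it, since the exchange is only an accounting device and both sides of the inequality are already determined. Scaling all coordinates by $e^i$ gives the same failure for every $i\ge 1$ (the two optimal costs land in $(e^{i-1},e^i]$). So the factor $\tfrac12$ you feared from naive charging is not an artifact of a weak technique; your charging argument, made precise, is essentially the standard proof that a greedy matching is maximal and hence at least half of any matching, and that is the best one can hope for here (indeed worse once the budget can bind, since greedy may also exhaust budget on cheap off-slice pairs).

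For comparison, the paper's own proof never confronts this issue at all: it decomposes $|M_{\le e^i}|$ into bands $|M^{B_j}_{(e^{j-1},e^j]}|$ and argues purely by budget counting, i.e., that a unit of budget buys at least as many pairs in a cheap cost band as in an expensive one, implicitly assuming that matching size is limited only by money and never by which tasks remain unclaimed. That assumption is precisely what your collision case refutes, so Lemma 1 is incorrect as stated, and the constant in Theorem 2 (the $\frac{1}{\lceil \ln(c_{max}+1)\rceil+1}$ bound) inherits the error: a correct analysis along the lines you propose yields only $2|M_{\le e^i}|\ge|\mathcal{O}(e^{i-1},e^i]|$ up to budget terms, degrading the competitive ratio by at least a constant factor. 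In short, your proposal is incomplete, but its sticking point exposes a genuine flaw in the lemma and in the paper's proof rather than a defect of your approach; the productive way forward is to prove the weakened (factor-$2$) version of the lemma and restate the theorem accordingly.
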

\begin{proof}
when $i=0$, matching pattern $M_{\le e^0}$ is achieved with the budget $B$ while $\mathcal{O}(0,e^0]$ is done with budget $B_{(0,e^0]}$. Since $B_{(0,e^0]}\le B$, then $|M_{\le e^0}|\ge|\mathcal{O}(0,e^0]|$.
when $i\ge1$,  we have $|M_{\le e^i}|=|M_{(0,e^0]}^{B_0}|+\sum_{j=1}^{i}|M_{(e^{j-1},e^j]}^{B_j}|$, where $\sum_{j=0}^{i}B_j=B$ and $B_j$  is determined by the Greedy-RT algorithm. 
Since $|M_{(0,e^0]}^{B_0}|\ge |M_{(e^{i-1},e^{i}]}^{B_0}|$ and $|M_{(e^{j-1},e^j]}^{B_j}|\ge|M_{(e^{i-1},e^{i}]}^{B_j}|$ for any $j\in[1,i]$, we have
\begin{equation}
|M_{\le e^i}|\ge|M_{(e^{i-1},e^{i}]}^{B_0}|+\sum_{j=1}^{i}|M_{(e^{i-1},e^{i}]}^{B_j}|=|M_{(e^{i-1},e^{i}]}^{B}|
\end{equation}
Similarly, $\mathcal{O}(e^{i-1},e^{i}]$ is achieved with the budget $B_{(e^{i-1},e^{i}]}$ and $B_{(e^{i-1},e^{i}]}\le B$, we have
\begin{equation}
|M_{\le e^i}|\ge|\mathcal{O}(e^{i-1},e^{i}]|.
\end{equation}
The lemma is proved. 
\end{proof}
\begin{theorem}\label{th2}
The competitive ratio of Greedy-RT Algorithm is not less than $\frac{1}{\lceil ln(c_{max}+1)\rceil+1}$.
\end{theorem}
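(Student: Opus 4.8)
The plan is to bound the expected matching size of Greedy-RT against the optimal matching $\mathcal{O}$ by averaging the per-threshold guarantee of Lemma \ref{lm1} over the uniform random choice of $\kappa$. Write $N=\lceil \ln(c_{max}+1)\rceil$ for brevity. Since the algorithm draws $\kappa$ uniformly from the $N+1$ values $\{0,1,\ldots,N\}$, each threshold $\tau=e^i$ is selected with probability $\frac{1}{N+1}$, so the expected size of the returned matching is
\begin{equation}
E[|M|]=\frac{1}{N+1}\sum_{i=0}^{N}|M_{\le e^i}|.
\end{equation}

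First I would apply Lemma \ref{lm1} term by term: the $i=0$ summand is at least $|\mathcal{O}(0,e^0]|$, and for every $i\ge 1$ the summand is at least $|\mathcal{O}(e^{i-1},e^i]|$, which gives
\begin{equation}
E[|M|]\ge \frac{1}{N+1}\left(|\mathcal{O}(0,e^0]|+\sum_{i=1}^{N}|\mathcal{O}(e^{i-1},e^i]|\right).
\end{equation}
The key step is then to recognize that the half-open intervals $(0,e^0],(e^0,e^1],\ldots,(e^{N-1},e^N]$ are pairwise disjoint and that their union $(0,e^N]$ contains the cost of every pair in $\mathcal{O}$; consequently the bracketed expression counts each pair of $\mathcal{O}$ exactly once and equals $|\mathcal{O}|$. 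Substituting this back yields $E[|M|]\ge \frac{1}{N+1}|\mathcal{O}|$, i.e. the competitive ratio is at least $\frac{1}{\lceil \ln(c_{max}+1)\rceil+1}$, as claimed.

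The step I expect to be the main obstacle, and the one I would justify most carefully, is the covering claim. It is exactly here that the chosen upper index $N=\lceil \ln(c_{max}+1)\rceil$ matters: because $e^N\ge e^{\ln(c_{max}+1)}=c_{max}+1>c_{max}$ and no feasible worker-task pair can have travel cost exceeding $c_{max}$, every pair of $\mathcal{O}$ falls into precisely one of the $N+1$ intervals, so none is lost from the partition. I would also note that had $N$ been chosen smaller, the interval $(0,e^N]$ would fail to cover the costliest pairs and the identity above would break down. A minor loose end to tidy up is the degenerate pair of cost $0$ (a worker colocated with its task), which lies outside $(0,e^0]$; such a pair is admitted under every threshold, so it can be absorbed into the $i=0$ bucket without weakening the bound.
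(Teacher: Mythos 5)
Your proposal is correct and follows essentially the same route as the paper's own proof: average the per-threshold guarantee of Lemma \ref{lm1} over the uniform choice of $\kappa$, then observe that the sum $|\mathcal{O}(0,e^0]|+\sum_{i=1}^{N}|\mathcal{O}(e^{i-1},e^i]|$ telescopes to $|\mathcal{O}|$. In fact you justify two points the paper leaves implicit --- that $e^{N}\ge c_{max}+1$ guarantees the intervals cover every optimal pair, and the treatment of zero-cost pairs --- so your write-up is, if anything, slightly more complete.
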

\begin{proof}
Let $n=\lceil ln(c_{max}+1)\rceil$. Since the exponential part of threshold is chosen evenly from a set of integers between $0$ and $n$, we have 
\begin{equation}
E(|M|)=\sum_{i=0}^{n}|M_{\le e^i}|p_i=\frac{1}{n+1}\sum_{i=0}^n|M_{\le e^i}|.
\end{equation}
According to Lemma \ref{lm1}, 
\begin{equation}
E(|M|)\ge\frac{1}{n+1}(|\mathcal{O}(0,e^0]|+|\sum_{i=1}^n|\mathcal{O}(e^{i-1},e^i]|)=\frac{1}{n+1}|\mathcal{O}|.
\end{equation}
That is, 
\begin{equation}
\frac{E(M)}{|\mathcal{O}|}\ge\frac{1}{\lceil ln(c_{max}+1)\rceil+1}.
\end{equation}
The theorem follows.
\end{proof}
\subsection{Greedy-OT Algorithm}
Greedy-RT is unstable due to its randomly selected threshold. 
Specifically, certain rational pairs will be neglected if the chosen threshold is too small, which leads its performance turns even worsen than the simple greedy. 
It is crucial for threshold-based greedy algorithms to choose an appropriate threshold. 
In this section, we propose a superior greedy variant, called Greedy-OT, which can generate a fixed near-optimal threshold.

There are periodic similarities on human labor and travel traces. For an instance, we arbitrarily extract six-days samples from Uber dataset in May 2014, each of which comprises the detail pickup records happened in Manhattan, New York city.
Fig. \ref{fg:distribution} depicts the distribution of pickups happened between 0 o'clock and 12 o'clock with heat mapps. 
\begin{figure}
\centering
\subfigure[May 6(Tue)]{\includegraphics[width=0.3\textwidth,height=3cm]{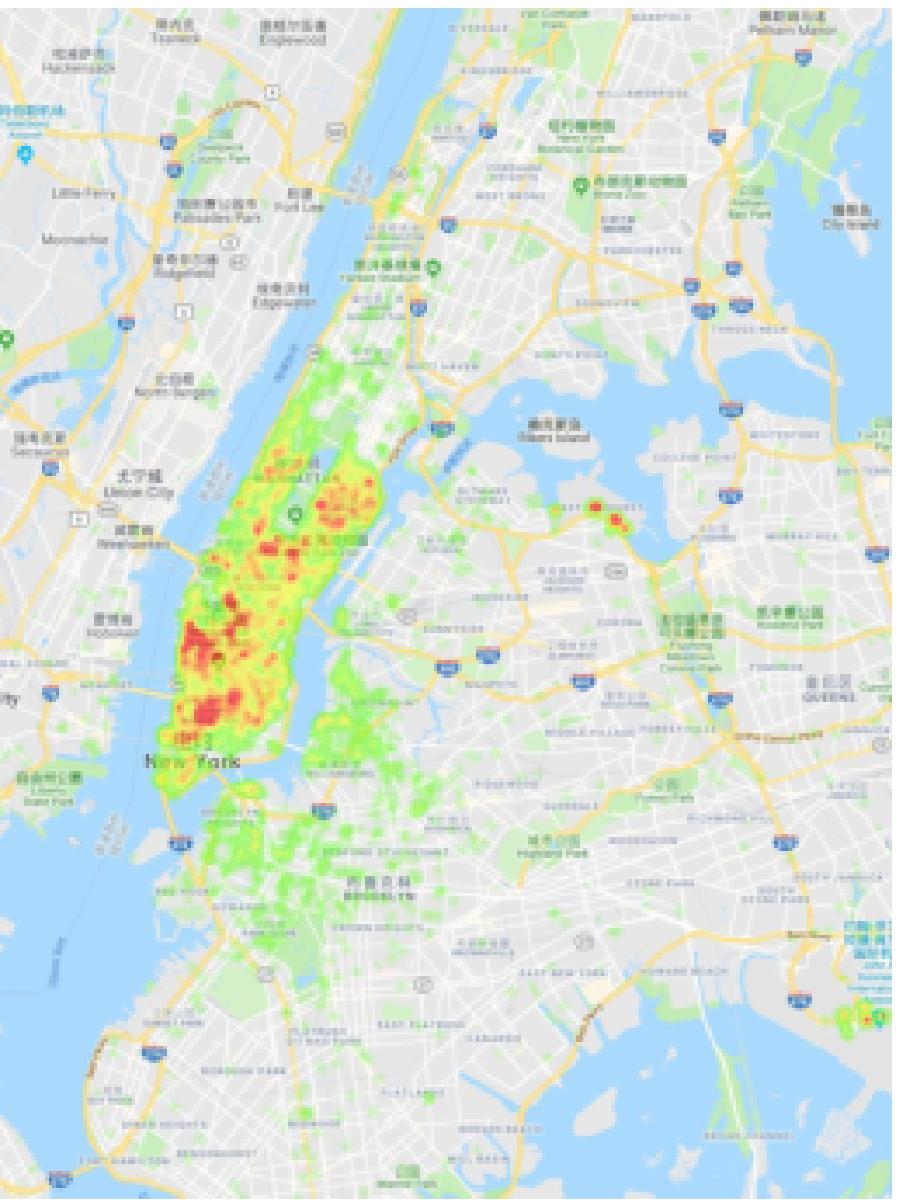}\label{fg:5-6}}
\subfigure[May 7(Wed)]{\includegraphics[width=0.3\textwidth,height=3cm]{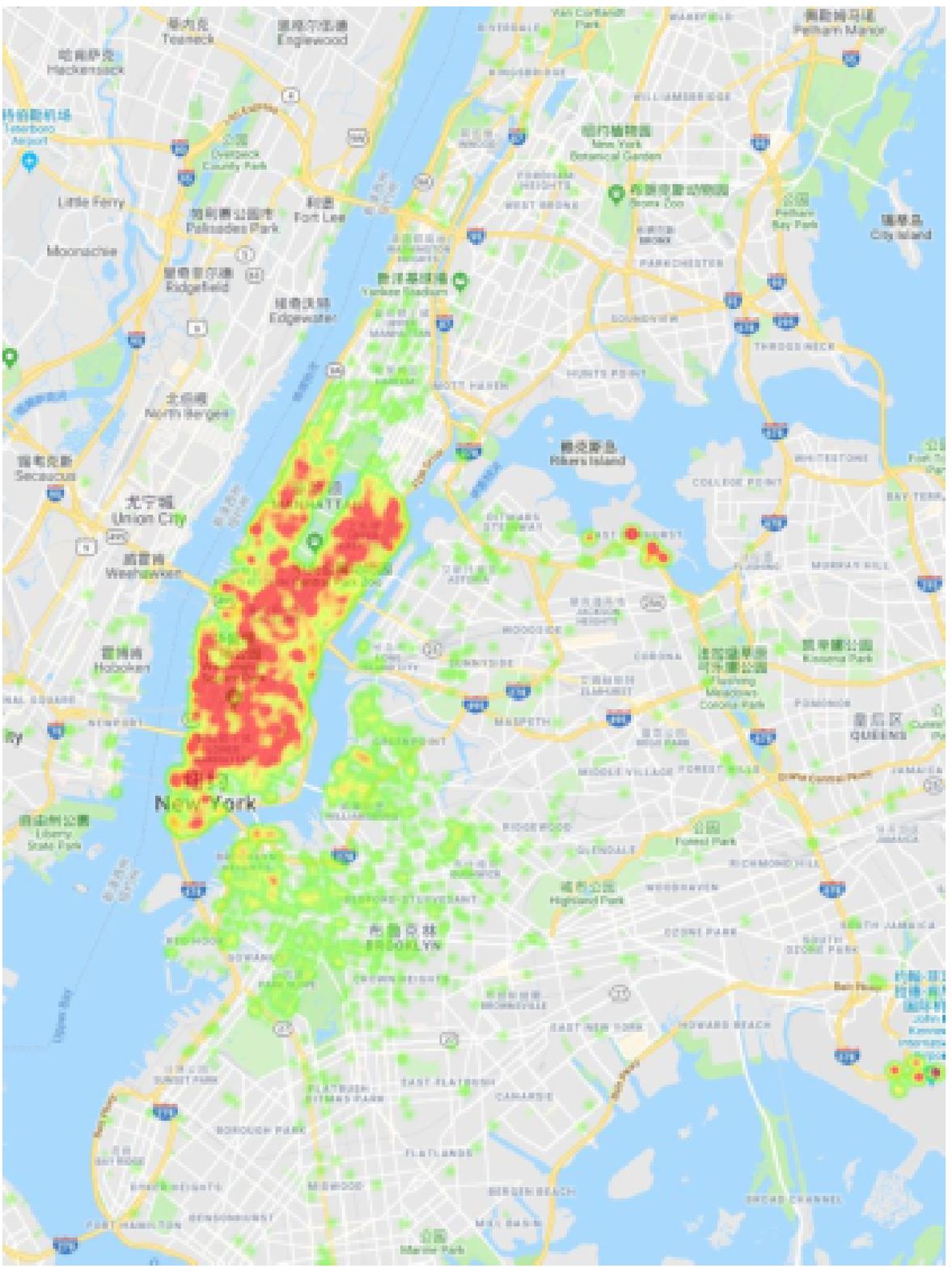}\label{fg:5-7}}
\subfigure[May 8(Thu)]{\includegraphics[width=0.3\textwidth,height=3cm]{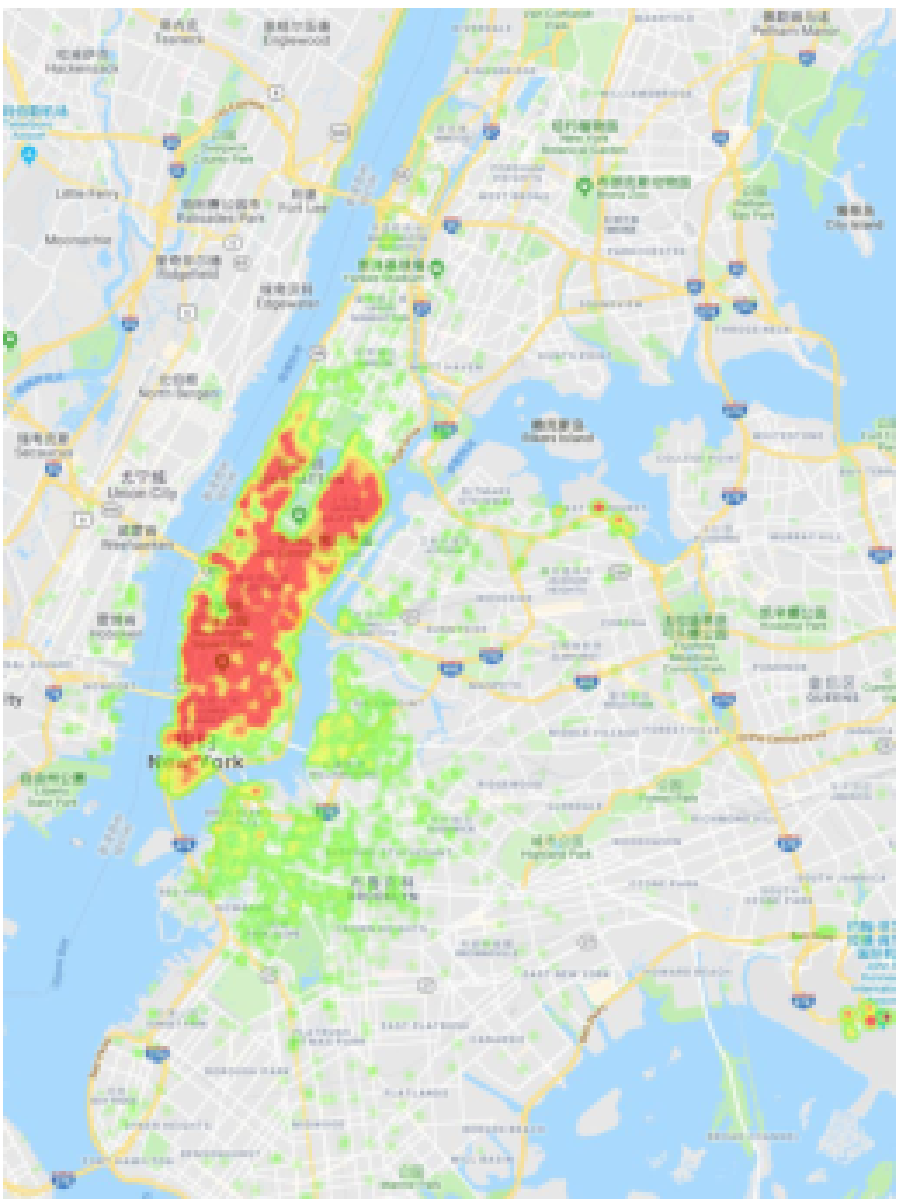}\label{fg:5-8}}
\vfill
\subfigure[May 10(Sat)]{\includegraphics[width=0.3\textwidth,height=3cm]{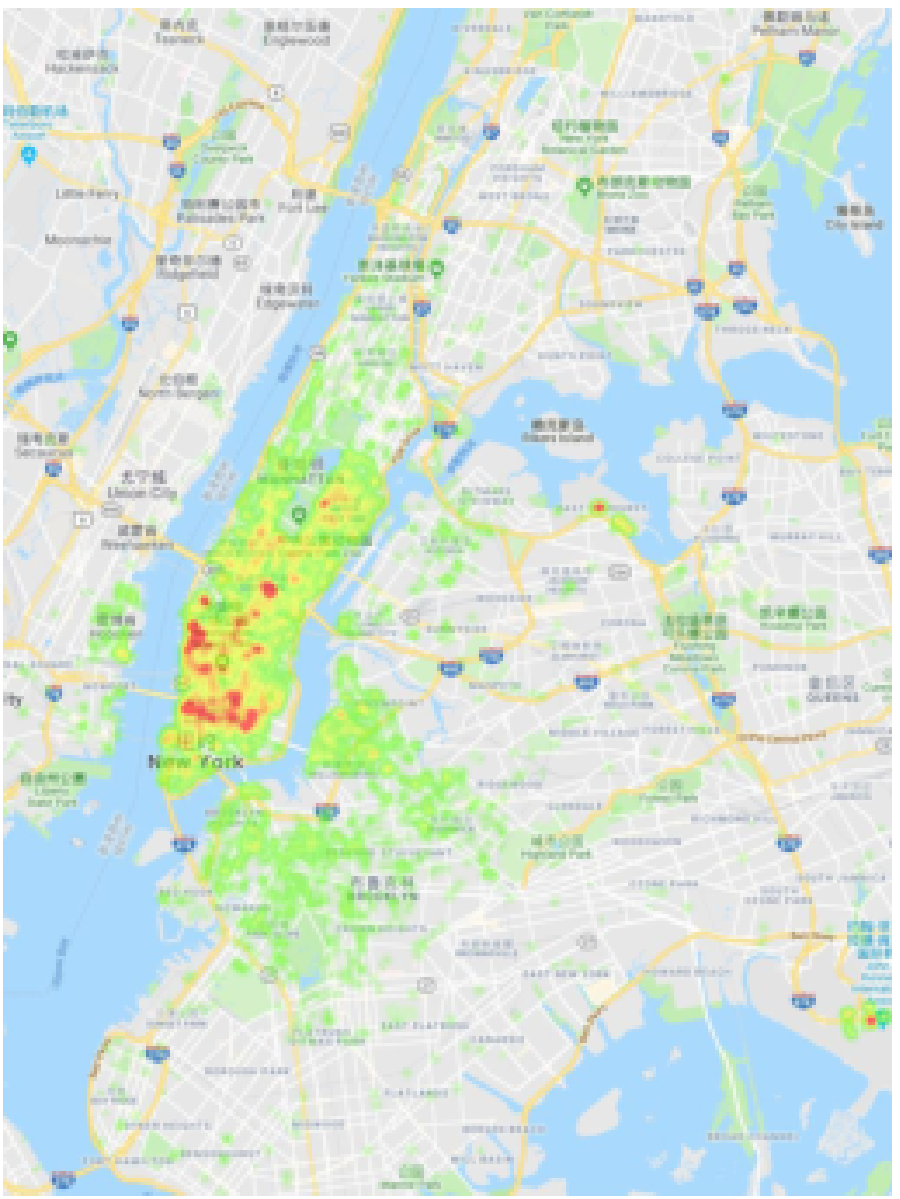}\label{fg:5-10}}
\subfigure[May 11(Sun)]{\includegraphics[width=0.3\textwidth,height=3cm]{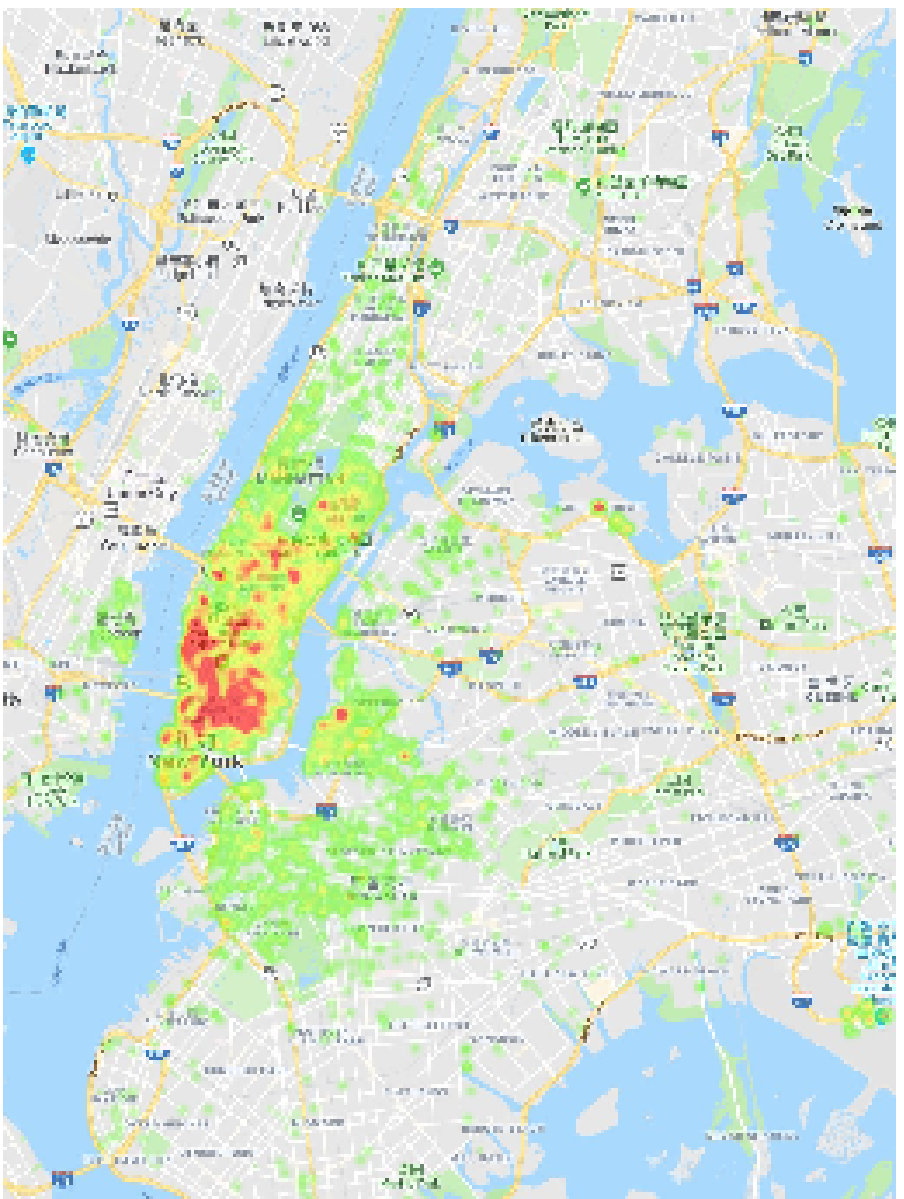}\label{fg:5-11}}
\subfigure[May 14(Wed)]{\includegraphics[width=0.3\textwidth,height=3cm]{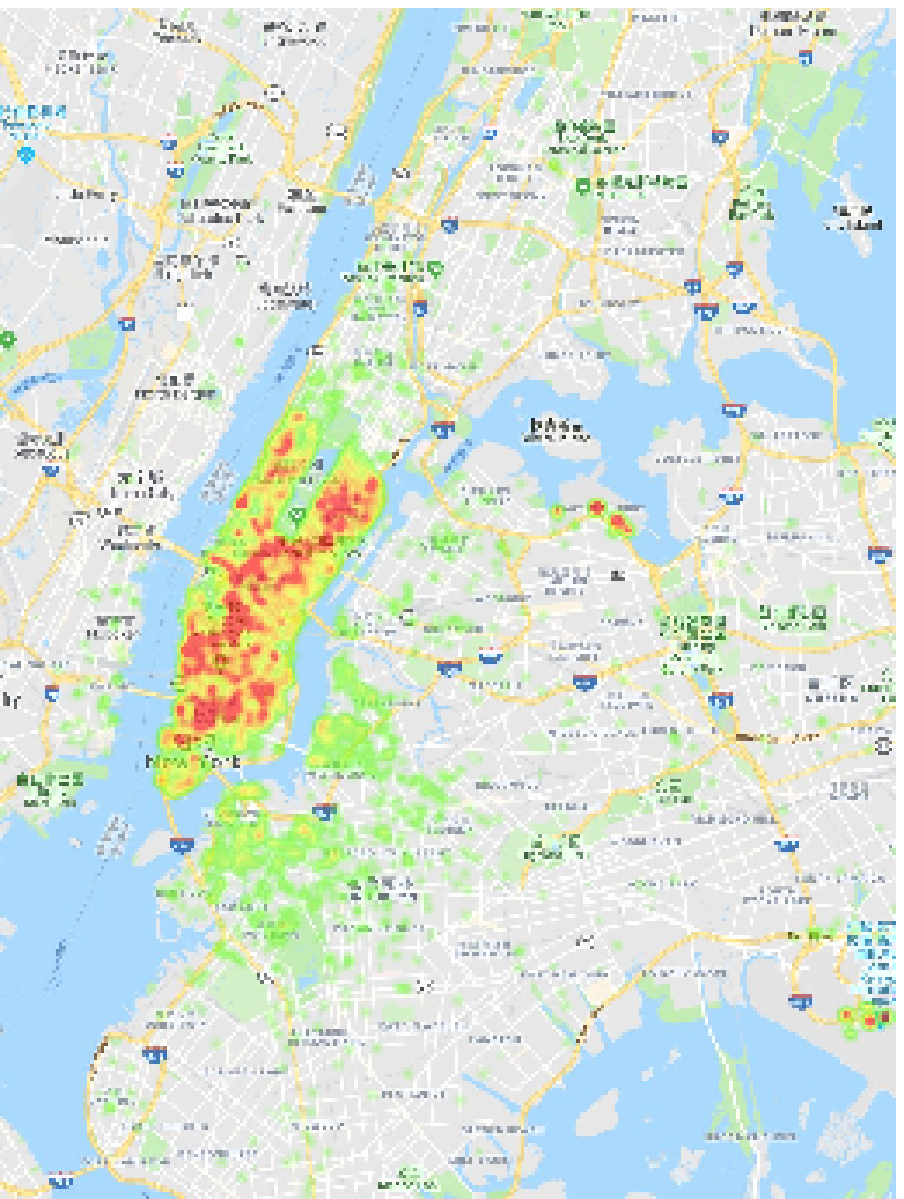}\label{fg:5-14}}
\caption{Similarity on the distribution of Uber pickups for different days in May, 2014}
\label{fg:distribution}
\end{figure}
We can observe that the spatiotemporal distributions of pickups are similarity each day. 
Although there is a large deviation in the quantity of pickups, the distributions of travel cost in optimal matching schemes for these days remain similar if pickups represent workers in matching with same batch of tasks. 
The main reason is that distribution plays crucial roles rather than quantity in budget-constraint matching problem, and superfluous workers would not produce significant impact on the final optimal matching scheme. 
Thus, it is feasible to utilize historical optimal matching scheme to guid the online assignment for the other days.
In this section, we propose another greed variant, called Greed-OT, which utilizes historical information to obtain proper threshold. Specifically, Greedy-OPT first runs the offline optimal algorithm based on workers' historical traces, and then from the matching result draws the maximum travel cost as target threshold. The detail description on Greedy-OT is listed in Algorithm \ref{al:ot}.
\begin{algorithm}[!h]
\caption{Greedy-OT Algorithm}
\label{al:ot}
\begin{algorithmic}[1]
\REQUIRE $B$,$\hat{M}^*$
\ENSURE matching scheme $M$
\STATE $c\gets0$, $\tau\gets$$\max\limits_{(w,t)\in\hat{M}^*}cost(w,t)$
\FORALL {new arrival worker $w$}
    \STATE $T^\prime\gets$\{$\forall t\vert b_w+cost(w,t)/v_w\le d_t\land cost(w,t)\le \tau\land c+cost(w,t)\le B$ \}
    \IF{$T^\prime\neq \emptyset$}
        \STATE $t=\min\limits_{t\in T^\prime}cost(w,t)$
        \STATE $M\gets M\cup$($w$,$t$)
        \STATE $c\gets c+cost(w,t)$
    \ENDIF
\ENDFOR
\RETURN $M$
\end{algorithmic}
\end{algorithm}
\begin{example}
Backing to our running example in Example \ref{ex:1}, we assume historical data has the same spatiotemporal distribution with this case. Greedy-OT first utilizes maximum travel cost of the optimal matching pairs as the threshold, here the maximum pair is $(w_4, t_5)$ with cost 5, and then prunes bad pairs whose cost exceeds 5. Thus, we can obtain the final matching solution:\\
$5$:$\{(w_2,t_2),(w_3,t_6),(w_5,t_1),(w_6,t_4)\}$, matching size:4, used budget:10.\\
In this example, Greedy-OT and OPT achieve the same score in matching size, but OPT is more frugal in budget expenditure for only 90 percent budget is spent while 100 percent does as for Greedy-OPT.
\end{example}
Next we deduce the competitive ratio of Greedy-OT. Suppose $n_i$ is the number of pairs with travel cost $c_i$ in the optimal set $\mathcal{O}$ which owns $N$ unique cost values. That is, the size of pairs $|\mathcal{O}|=\sum_{i=1}^N n_i$. The travel cost of any  pair from $w$ to $t$ in $\mathcal{O}$ can be denoted by $c_{(w,t)_{opt}}$, where $(w,t)_{opt}\in[1,N]$.
$W_\mathcal{O}$ is the worker set and $T_\mathcal{O}$ is the task set in $\mathcal{O}$. 
The maximum travel cost of pairs in $\mathcal{O}$ is denoted by $c_{max}^*$, and $c_{max}^*+\varepsilon$ is the threshold that Greedy-OT adopts to prune inferior matching pairs. 
\begin{lemma}\label{lm:overlap}
If there exists a pair$(w,t)$ chosen by Greedy-OT with its cost is little than $c_{max}^*$, then it must be $w\in W_\mathcal{O}$ or $t\in T_\mathcal{O}$.
\end{lemma}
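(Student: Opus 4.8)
The plan is to argue by contradiction, leaning on the structure of the offline optimum rather than on a naive budget count. Assume the conclusion fails, so Greedy-OT picks a pair $(w,t)$ with $cost(w,t)<c_{max}^*$ while $w\notin W_\mathcal{O}$ and $t\notin T_\mathcal{O}$ simultaneously. The single fact I would harvest from ``$(w,t)$ is chosen by Greedy-OT'' is that this pair survived the filter on line~3 of Algorithm~\ref{al:ot}: it is \emph{feasible} (the worker reaches the task before its deadline) and respects the threshold. Thus $(w,t)$ is an admissible edge of the bipartite feasibility graph both of whose endpoints are left untouched by $\mathcal{O}$. A purely local budget exchange is tempting but insufficient here, since $\mathcal{O}$ is only the cheapest \emph{prefix} of a maximum matching and need not be minimum-cost among matchings of its own size; I would therefore target the optimality of the underlying flow instead.

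First I would treat the transparent case. Recall from Theorem~\ref{th1} and Algorithm~\ref{al:opt} that $\mathcal{O}$ is obtained by truncating a min-cost maximum-cardinality matching $\hat{M}$ to its cheapest pairs that fit in $B$. If both $w$ and $t$ are also unmatched in the full matching $\hat{M}$, then $\hat{M}\cup\{(w,t)\}$ is a strictly larger matching, contradicting the maximum cardinality of $\hat{M}$. Hence at least one endpoint, say $w$, is matched in $\hat{M}$ but was discarded during truncation; let $(w,t')\in\hat{M}\setminus\mathcal{O}$ be its pair. Because truncation retains the cheapest pairs in ascending order and $c_{max}^*$ is the largest cost kept, every discarded pair costs at least $c_{max}^*$, so $cost(w,t')\ge c_{max}^*>cost(w,t)$. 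If the other endpoint $t$ is free in $\hat{M}$, the rematch $\hat{M}\setminus\{(w,t')\}\cup\{(w,t)\}$ preserves cardinality yet strictly lowers total cost, contradicting the minimum-cost property of $\hat{M}$; this is exactly where the strict inequality $cost(w,t)<c_{max}^*$ is consumed.

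The hard part will be the remaining case, in which \emph{both} $w$ and $t$ are matched in $\hat{M}$ (to some $t'$ and $w''$) yet both are dropped by the budget truncation, so $cost(w,t')\ge c_{max}^*$ and $cost(w'',t)\ge c_{max}^*$. A single edge swap now changes cardinality, so the one-step contradiction is unavailable. I would resolve it by a cardinality-preserving alternating rematch: insert $(w,t)$, delete $(w,t')$ and $(w'',t)$, and reconnect the two orphaned endpoints $w''$ and $t'$ by rerouting along an alternating path in $\hat{M}$, then argue that the net cost change is negative, driven by the fact that the two deleted edges each cost at least $c_{max}^*$ while the inserted edge $(w,t)$ costs strictly less. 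Establishing that such a rematch is feasible and genuinely cost-decreasing — the delicate bookkeeping against $\hat{M}$'s min-cost optimality — is the crux; once it is in place the optimality of $\hat{M}$ is violated, the contradiction closes, and $w\in W_\mathcal{O}$ or $t\in T_\mathcal{O}$ follows.
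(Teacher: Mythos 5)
There is a genuine gap, and it sits exactly where you flagged it. Your first two cases are sound: if both $w$ and $t$ are free in $\hat{M}$, adding $(w,t)$ (which is feasible, having passed Greedy-OT's deadline filter) contradicts maximum cardinality; and if $w$ is matched in $\hat{M}$ to a discarded pair $(w,t')$ while $t$ is free, the swap is cardinality-preserving and strictly cost-decreasing since every truncation-discarded pair costs at least $c_{max}^*>cost(w,t)$. But the third case is never closed, and the repair you sketch does not go through: the reconnecting edge $(w'',t')$ may violate the deadline constraint (so it need not exist in the feasibility graph at all), and even when it exists its cost is unbounded, so the net change $cost(w,t)+cost(w'',t')-cost(w,t')-cost(w'',t)$ can easily be positive. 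The fact that the two deleted edges each cost at least $c_{max}^*$ buys you nothing against an arbitrarily expensive reconnection, and rerouting along a longer alternating path in $\hat{M}$ has the same defect: nothing controls the costs of the edges such a path forces you to add. As proposed, the flow-level argument cannot be completed.

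The wrong turn is your opening dismissal of the ``purely local budget exchange'': that exchange is precisely the paper's proof, and it works because in this lemma $\mathcal{O}$ is the analysis object --- the genuine optimal solution to BOA (maximum cardinality within budget, with minimum cost among such solutions) --- not the heuristic prefix produced by Algorithm~\ref{al:opt}. Under that reading the argument is one step: $(w,t)$ is feasible, touches no vertex of $\mathcal{O}$ by the contradiction hypothesis, and costs strictly less than $c_{max}^*$, the cost of the dearest pair of $\mathcal{O}$; hence replacing that dearest pair by $(w,t)$ yields a matching of the same cardinality and strictly smaller total cost, still within $B$, contradicting the choice of $\mathcal{O}$. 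Your observation that the truncated prefix of $\hat{M}$ need not be minimum-cost among matchings of its own size is a fair criticism of Algorithm~\ref{al:opt} as an exact solver, but it attaches to the wrong object: the lemma and the ensuing competitive-ratio analysis quantify over the true optimum, and by shifting the argument onto $\hat{M}$ you manufactured the very case you could not resolve.
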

\begin{proof}
Suppose there exits a pair $(w,t)\notin \mathcal{O}$ with its cost little than $c_{max}^*$, and its two ends, $w$ and $t$, follows $w\notin W_\mathcal{O}$ and $t\notin T_\mathcal{O}$, then by replacing the maximum cost pair with $(w,t)$, we would obtain a new optimal solution, which is contradict with the fact that $\mathcal{O}$ is the genuine optimal solution. The lemma is proved.
\end{proof}
\begin{lemma}\label{lm:cost}
If a pair $p$ chosen by Greedy-OT  is overlapped with a pair $p^*$ in $\mathcal{O}$ at least one point ($w$ or $t$), then $cost(p^*)\le cost(p)\le c_{max}^*+\varepsilon$.
\end{lemma}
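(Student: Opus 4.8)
The statement splits into two inequalities, so the plan is to dispose of the easy upper bound first and then concentrate on the lower bound. The bound $cost(p)\le c_{max}^*+\varepsilon$ is immediate from the construction of Greedy-OT: every pair it selects is drawn from the set $T^\prime$, whose defining condition includes $cost(w,t)\le \tau$ with $\tau=c_{max}^*+\varepsilon$, so any chosen pair $p$ already satisfies $cost(p)\le c_{max}^*+\varepsilon$. No further work is needed on this half.

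For the lower bound $cost(p^*)\le cost(p)$ I would argue by contradiction, in the same exchange style used in the proof of Lemma~\ref{lm:overlap}. Write $p=(w,t)$ and let $p^*$ be the pair of $\mathcal{O}$ that shares an endpoint with $p$; by hypothesis the common vertex is either $w$ or $t$. Suppose toward a contradiction that $cost(p)<cost(p^*)$. The idea is to substitute $p$ for $p^*$ inside $\mathcal{O}$, that is, to remove $p^*$ and insert $p$. Because $p$ and $p^*$ share a vertex, the substitution keeps that vertex matched and merely frees the other endpoint of $p^*$, so the resulting edge set is again a matching of the same cardinality as $\mathcal{O}$, hence still maximum; and since $cost(p)<cost(p^*)$ its total cost is strictly smaller, contradicting the minimum-cost optimality of $\mathcal{O}$. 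This forces $cost(p^*)\le cost(p)$.

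The step that needs care, and where I expect the real obstacle, is verifying that the substituted edge set is genuinely a feasible maximum matching. Two things can go wrong. First, the deadline constraint must continue to hold, but this is automatic for $p$ itself since Greedy-OT only admits deadline-feasible pairs into $T^\prime$. Second, and more seriously, inserting $p=(w,t)$ collides with $\mathcal{O}$ whenever the non-shared endpoint of $p$ is also used by $\mathcal{O}$, i.e.\ when both $w\in W_\mathcal{O}$ and $t\in T_\mathcal{O}$. In that situation the single-pair swap removes one pair and adds one pair but leaves a second $\mathcal{O}$-pair clashing on the non-shared endpoint, so the naive exchange either drops below maximum size or reuses a vertex. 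When exactly one endpoint of $p$ lies in $\mathcal{O}$ the clean one-edge swap above goes through unharmed; the hard case is precisely the double-incidence one, which I would attack by shifting the matching along the alternating path (or cycle) that $p$ forms together with the two $\mathcal{O}$-edges incident to $w$ and $t$, and then arguing that the aggregate cost cannot strictly decrease without contradicting optimality. Controlling the \emph{per-pair} cost comparison through such an alternating argument, while guaranteeing that every re-routed pair remains deadline-feasible, is the delicate part of the proof, and it is also where the guarantee of Lemma~\ref{lm:overlap} that at least one endpoint is shared must be exploited most carefully.
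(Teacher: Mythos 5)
Your upper bound and your single-incidence exchange together are, in fact, the paper's \emph{entire} proof: the paper dismisses $cost(p)\le c_{max}^*+\varepsilon$ via the threshold and then performs exactly the naive one-edge swap you describe, with no case distinction and no feasibility check on the swapped set. So the obstacle you isolate (both endpoints of $p$ already matched in $\mathcal{O}$) is a genuine gap in the paper itself, and your diagnosis of it is sharper than the paper's treatment. However, your proposed repair --- re-routing along the alternating path formed by $p$ and the two $\mathcal{O}$-edges incident to $w$ and $t$ --- cannot be completed, because in the double-incidence case the per-pair inequality you are trying to prove is simply false. Concretely, place everything on a line: $t'=(0,0)$, $w=(5,0)$, $t=(6,0)$, $w'=(11,0)$, budget $B=10$, deadlines generous. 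The two maximum matchings are $\{(w,t'),(w',t)\}$ with cost $5+5=10$ and $\{(w,t),(w',t')\}$ with cost $1+11=12$, so the min-cost max-flow step returns $\mathcal{O}=\{(w,t'),(w',t)\}$, both pairs fit within the budget, and $c_{max}^*=5$. If $w$ arrives first, Greedy-OT (threshold $5$, or $5+\varepsilon$) selects its nearest feasible task, namely $p=(w,t)$ with $cost(p)=1$. This $p$ overlaps $\mathcal{O}$ at both endpoints, and each overlapping optimal pair has cost $5>1=cost(p)$, contradicting the claimed $cost(p^*)\le cost(p)$.

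The reason your alternating-path plan yields nothing is visible in the same example: shifting the matching along the path from $t'$ through $w$ and $t$ to $w'$ replaces \emph{two} optimal pairs by \emph{one} Greedy pair, so the resulting matching has strictly smaller cardinality; since $\mathcal{O}$ is optimal in cardinality first (cost is only a tie-breaker), no contradiction can be extracted from any cost comparison whatsoever. Thus the step you defer as ``the delicate part'' is not delicate but impossible: the lemma as stated is false in the double-incidence case, and any correct version must either exclude that case by hypothesis, weaken the conclusion to an aggregate (amortized) comparison over several pairs rather than a per-pair bound, or change what $p^*$ is compared against. As written, your proposal does not prove the statement --- but neither does the paper, whose proof is exactly your ``easy case'' applied unconditionally.
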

\begin{proof}
Since Greedy-OT selects  $c_{max}^*+\varepsilon$ as the filter threshold, the cost of any chosen pair can not beyond $c_{max}^*+\varepsilon$. 
Assume the cost of $p$ is little than $p^*$, then $p$ should be added into $\mathcal{O}$ instead of $p^*$, which is contradict with the fact that $p^*$ is the genuine pair in $\mathcal{O}$. Thus, the assumption that $cost(p)<cost(p^*)$ is false and alternatively $cost(p)\ge cost(p^*)$. The lemma is proved.
\end{proof}
\begin{theorem}\label{th2}
The competitive ratio of Greedy-OT is not less than $\frac{\sum_{i=1}^N c_i n_i}{(c_{max}^*+\varepsilon)\cdot\sum_{i=1}^N n_i}$.
\end{theorem}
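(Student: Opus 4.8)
The plan is to convert the competitive-ratio bound into a single inequality about total travel cost. Writing $cost(M)=\sum_{(w,t)\in M}cost(w,t)$ and recalling $|\mathcal{O}|=\sum_{i=1}^N n_i$ together with $cost(\mathcal{O})=\sum_{i=1}^N c_i n_i$, the claimed ratio
\begin{equation}
\frac{|M|}{|\mathcal{O}|}\ge\frac{\sum_{i=1}^N c_i n_i}{(c_{max}^*+\varepsilon)\sum_{i=1}^N n_i}
\nonumber
\end{equation}
is equivalent to $(c_{max}^*+\varepsilon)\,|M|\ge cost(\mathcal{O})$. Because the threshold filter in Algorithm \ref{al:ot} admits a pair only when $cost(w,t)\le c_{max}^*+\varepsilon$, every pair in $M$ costs at most $c_{max}^*+\varepsilon$, so $cost(M)\le(c_{max}^*+\varepsilon)\,|M|$. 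Hence it suffices to prove the key inequality $cost(M)\ge cost(\mathcal{O})$: once this is in hand, chaining the two inequalities closes the theorem.

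Second, I would establish $cost(M)\ge cost(\mathcal{O})$ by a cost-dominating charging argument built on the two lemmas. The idea is to define an injection $\phi:\mathcal{O}\to M$ sending each optimal pair $p^*$ to a pair $\phi(p^*)\in M$ that shares a worker or a task with it. Lemma \ref{lm:overlap} guarantees that any pair Greedy-OT selects with cost below $c_{max}^*$ must overlap $\mathcal{O}$ and, read in the other direction, forces every optimal pair to be touched by $M$; Lemma \ref{lm:cost} then supplies the crucial domination $cost(\phi(p^*))\ge cost(p^*)$ on each overlapping pair. Summing over $\mathcal{O}$ yields $cost(M)\ge\sum_{p^*\in\mathcal{O}}cost(\phi(p^*))\ge\sum_{p^*\in\mathcal{O}}cost(p^*)=cost(\mathcal{O})$, provided $\phi$ is injective. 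To make injectivity rigorous I would decompose $M\cup\mathcal{O}$ into vertex-disjoint alternating paths and even cycles (both $M$ and $\mathcal{O}$ being matchings) and charge each optimal edge to an adjacent Greedy edge inside its own component, where Lemma \ref{lm:cost} provides the required cost comparison.

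The main obstacle is precisely the well-definedness of this injection, and it is where the budget must enter. Along an alternating cycle the $M$- and $\mathcal{O}$-edges are equinumerous and the charging is clean, but an alternating path can carry one more optimal edge than Greedy edge, leaving that surplus optimal pair with an endpoint uncovered by $M$. By the contrapositive of Lemma \ref{lm:overlap} such an uncovered pair can only occur when its cost sits at the top of the range, i.e.\ essentially $c_{max}^*$, and this situation arises exactly when Greedy-OT halts with the budget exhausted rather than with $M$ maximal. I would therefore split the analysis: if Greedy-OT terminates with the budget essentially spent, argue directly that $cost(M)\ge cost(\mathcal{O})$, since both quantities are capped by $B$ and Greedy has consumed almost all of it, bypassing the injection entirely; otherwise $M$ is maximal, every optimal pair is covered, and the alternating-path charging delivers $cost(M)\ge cost(\mathcal{O})$. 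Reconciling these two regimes into the single clean inequality, and controlling the at-most-one uncovered optimal edge per path through the $c_{max}^*+\varepsilon$ threshold, is the delicate part; the remaining summation is routine.
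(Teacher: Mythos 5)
Your opening reduction is sound, and it actually matches the skeleton of the paper's own argument: the claimed ratio is algebraically equivalent to $(c_{max}^*+\varepsilon)\,|M|\ge\sum_{i=1}^N c_i n_i=cost(\mathcal{O})$, and since the filter in Algorithm \ref{al:ot} admits only pairs of cost at most $c_{max}^*+\varepsilon$, it would indeed suffice to show $cost(M)\ge cost(\mathcal{O})$. The paper obtains this step essentially by fiat: it writes $E(Greedy\textrm{-}OT)=E(B/cost)=B\cdot E(1/cost)$, i.e.\ it \emph{assumes} the budget is fully consumed, so that $cost(M)=B\ge\sum_i c_i n_i$ comes for free; the rest of its proof treats the realized costs $\xi_{ij}$ as random variables. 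You instead try to prove $cost(M)\ge cost(\mathcal{O})$ deterministically, and that is where the proposal breaks down.

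The charging argument cannot work. An injection $\phi:\mathcal{O}\to M$ with $cost(\phi(p^*))\ge cost(p^*)$ would force $|M|\ge|\mathcal{O}|$, i.e.\ it would prove Greedy-OT matches at least as many pairs as the optimum, which is far stronger than the theorem and false. In the decomposition of $M\cup\mathcal{O}$ into alternating paths and cycles, one Greedy edge can be adjacent to \emph{two} optimal edges (one through its worker, one through its task), so the charging is two-to-one at best; also note that Lemma \ref{lm:overlap} speaks only about Greedy pairs overlapping $\mathcal{O}$, not conversely about optimal pairs being touched by $M$. Your two-regime repair does not close the gap. Concretely, place on a line $t_2$ at $0$, $w_1$ at $10$, $t_1$ at $11$, $w_2$ at $12$, with budget $B=11$, generous deadlines, and $w_1$ arriving before $w_2$. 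Then $\mathcal{O}=\{(w_1,t_2),(w_2,t_1)\}$ with $cost(\mathcal{O})=11$ and $c_{max}^*=10$, so the threshold is $10$ (take $\varepsilon=0$). Greedy-OT matches $w_1$ to its nearest task $t_1$ (cost $1$); when $w_2$ arrives its only remaining task $t_2$ costs $12>10$ and is pruned, so $M=\{(w_1,t_1)\}$ and $cost(M)=1<11=cost(\mathcal{O})$. Neither of your regimes applies: the budget is almost untouched, yet no further pair can be added, and both optimal edges are ``covered'' by the single Greedy edge on the path $t_2-w_1-t_1-w_2$, so injectivity fails exactly as feared. The same example gives $|M|/|\mathcal{O}|=1/2<11/20$, i.e.\ the stated bound itself fails under adversarial arrival order; this is precisely why the paper's proof retreats to a heuristic expectation (budget fully spent, $\xi_{ij}$ uniform on $[c_i,c_{max}^*+\varepsilon]$) rather than a worst-case argument. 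Any correct deterministic version of your approach would need an additional hypothesis such as budget exhaustion, or must accept a factor-two loss from the two-to-one charging.
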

\begin{proof}
For the case $\varepsilon\ge 0$, according to Lemma \ref{lm:overlap} and \ref{lm:cost}, the cost of pair $(w, t)$ chosen by Greedy-OT affiliates with either of the following cases: \\
1)$cost(w,t)\in[c_{i},c_{max}^*+\varepsilon]$, where $w\in W_\mathcal{O}$, $cost(w,)_{opt}=c_i$ or $t\in T_\mathcal{O}$, $cost(,t)_{opt}=c_i$;\\
2)$cost(w,t)\in[c_{max}^*,c_{max}^*+\varepsilon]$, where $w\notin W_\mathcal{O}$ and $t\notin T_\mathcal{O}$.\\
For both of cases, any pair $(w,t)$ that Greedy-OT chooses can be mapped to an optimal pair in $\mathcal{O}$, which inflates the cost from optimal value $c$ to $cost(w,t)\in[c,c_{max}^*+\varepsilon]$ or $[c_{max}^*,c_{max}^*+\varepsilon]$ and consequently decreases the matched number of Greedy-OT. 
Specifically, for the case 1, pair $(w,t)$ is bound to be mapped to a optimal pair whose worker entry is $w$ or task entry is $t$, and for the case 2, pair $(w,t)$ is mapped to a random optimal pair which will never be selected by the end of Greedy-OT. 
Suppose $\xi_{ij}$ is the travel cost of chosen pair which is mapped to the $j$-th pair among $n_i$ optimal pairs with the same cost $c_i$, then the lower bound of $\xi_{ij}$ is either $c_i$ (case 1) or $c_{max}^*$ (case 2) while the upper bound are both $c_{max}^*+\varepsilon$. 
The worst expectation of Greedy-OT is
\begin{equation}
\begin{split}
E(Greedy-OT)&=E(\frac{B}{cost})=B\cdot E(\frac{1}{cost})\\
&=\frac{B}{\sum_{i=1}^N n_i}(\sum_{i=1}^N\sum_{j=1}^{n_i}\frac{1}{\xi_{ij}})\\
&\ge \frac{B}{\sum_{i=1}^N n_i}(\sum_{i=1}^N {n_i}\cdot \frac{1}{c_{max}^*+\varepsilon})\\
&=\frac{B}{c_{max}^*+\varepsilon}
\end{split}
\end{equation}
Since $B\ge\sum_{i=1}^N c_i n_i$, we have
\begin{equation}
\begin{split}
CR(Greedy-OT)&=\frac{E(Greedy-OT)}{|\mathcal{O}|}\\
&\ge \frac{\sum_{i=1}^N c_i n_i}{(c_{max}^*+\varepsilon)\cdot \sum_{i=1}^N n_i}.
\end{split}
\end{equation}

For the case $\varepsilon< 0$, any chosen pair $(w, t)$ by Greedy-OT is bound to overlap with $\mathcal{O}$ according to Lemma \ref{lm:overlap}. The travel cost of $(w, t)$ only affiliates one case, that is\\ 
1)$cost(w,t)\in[c_{i},c_{max}^*+\varepsilon]$, where $w\in W_\mathcal{O}$, $cost(w,)_{opt}=c_i$ or $t\in T_\mathcal{O}$, $cost(,t)_{opt}=c_i$.\\\ 
Suppose $c_K\le c_{max}^*+\varepsilon<c_{K+1}\le c_N$,  then we have
\begin{equation}
\begin{split}
E(Greedy-OT)&=\frac{B}{\sum_{i=1}^K n_i}(\sum_{i=1}^K\sum_{j=1}^{n_i}\frac{1}{\xi_{ij}})\\
&\approx \frac{B}{\sum_{i=1}^K n_i}(\sum_{i=1}^K {n_i}\cdot E(\frac{1}{\xi_{ij}})),
\end{split}
\end{equation}
where $\xi_{ij}\in[c_i,c_{max}^*+\varepsilon]$. Here
\begin{equation}
\begin{split}
E(\frac{1}{\xi_{ij}})&=\int_{c_i}^{c_{max}^*+\varepsilon} \frac{1}{\xi_{ij}}\cdot \frac{1}{c_{max}^*+\varepsilon-c_i}d{\xi_{i}}\\
&=\frac{\ln{(c_{max}^*+\varepsilon)}-\ln{c_i}}{c_{max}^*+\varepsilon-c_i}. 
\end{split}
\end{equation}
Since $B\ge\sum_{i=1}^N c_i n_i$, we have
\begin{equation}
\begin{split}
CR(Greedy-OT)&=\frac{E(Greedy-OT)}{|\mathcal{O}|}\\
&\ge \frac{(\sum_{i=1}^N c_i n_i)\cdot (\sum_{i=1}^K {n_i}\frac{\ln{(c_{max}^*+\varepsilon)}-\ln{c_i}}{c_{max}^*+\varepsilon-c_i})}{\sum_{i=1}^N n_i\cdot \sum_{i=1}^K n_i}.
\end{split}
\end{equation}
On the ground that $f(x)=\frac{\ln{(c_{max}^*+\varepsilon)}-\ln{x}}{c_{max}^*+\varepsilon-x}$ is a decreasing function on $x$, and $\lim_{x\to c_{max}^*+\varepsilon}f(x)=\frac{1}{c_{max}^*+\varepsilon}$, we have
\begin{equation}
\begin{split}
CR(Greedy-OT)&\ge \frac{\sum_{i=1}^N c_i n_i}{(c_{max}^*+\varepsilon)\cdot\sum_{i=1}^N n_i}.
\end{split}
\end{equation}
The theorem follows.
\end{proof}
\section{Experiments}
\subsection{Experiment Setup}
$\textbf{Synthetic Dataset}$. We generate 10000 workers and 10000 tasks on a 500$\times$500 2D square where all positions are generated randomly. 
The value of $c_{max}$ is set to be 1000 which is the maximum Manhattan distance in the specific area.
The appearance of workers follows adversary model and random model separately, and in either of the models, the arrival time of workers and the release time of tasks scatter randomly between 0 and 99. 
The measurement on a worker's travel cost equals to the Manhattan distance between his present and target location. 
For simplicity, we set all workers' velocity to 1, so that time cost is equivalent to distance cost.
Platform expends budget to reward any assigned worker according his travel cost. 
The statistics and configuration of synthetic data are illustrated in Table \ref{tb:sd}, where the default settings are marked in bold.

$\textbf{Real Dataset}$. We choose Uber Trip Data \cite{Uber} as our real dataset, which contains data on over 4.5 million Uber pickups in New York City from April to September 2014, and 14.3 million more Uber pickups from January to June 2015. 
Based on the raw data of the second week in May, 2014, we select the rectangle area from the position(Lat:40.5998, Lon:-74.0701) to the position(Lat:40.8998, Lon:-73.7701) as investigative area whose maximum distance $c_{max}$ is 41.7027km, which equals the Euclidean distance of its diagonal.
Any pickup has appeared from 0 o'clock to 12 o'clock in the area plays as a worker and the velocity of workers are set to the same value 40km/h.
Due to the lack of task information in UTD dataset, we randomly generate 6000 tasks whose location is limited in the area, release time randomly scatters from 0:00 to 12:00 and 180 minutes survival time before deadline. 
We extract the optimal threshold from the pickups data in May 7(Wed),2014, and then utilize it to guide the online matching for other days, including May 6(the previous day), May 8(the next day), May 10(weekend), and additional May 14(Wed of the next week). 
In our case, the requester supplies budget in amount of 300 to reward workers, which means the total travel cost (distance cost) of chosen workers can not beyond 300km.

We compare Greedy-RT, Greedy-OT with the baseline Greedy and the offline optimal OPT in terms of total matching size, running time and memory cost, and study the effect of varying parameters.
All the algorithms are implemented in C++, and run in a machine with Intel(R) Core(TM) i5-2400 CPU and 4GB main memory. 
\begin{table}
\caption{Synthetic Dataset}\label{tb:sd}
\centering
\begin{tabular}{|c|c|c|} \hline
Factor&Setting&Notes\\\hline
$|W|$&2000,4000,$\bm{6000}$,8000,10000&No. of workers\\\hline
$|T|$&2000,4000,$\bm{6000}$,8000,10000&No. of tasks\\\hline
$B$&1000,2000,$\bm{3000}$,4000,5000&total budget\\\hline
$d_t$&20,40,$\bm{60}$,80,100&Deadline of task\\\hline
\end{tabular}
\end{table}
\subsection{Adversary Model}
\textbf{Effect of $|W|$}. 
The first column of Fig. \ref{fg:ad} shows the results when $|W|$ varies from 2000 to 10000. 
We can observe that for all the algorithms except the simple greedy, the quantity of successful matched pairs increases as $|W|$ increases, which is natural as there are more competent pairs that can be matched when more workers are available. 
Only the curve of simple greedy declines when $|W|$ increases since pairs with larger cost keep growth in quantity and have prior in matching in adversary model. 
Among all online algorithms, Greedy-OT performs the best, followed by Greedy-RT and simple greedy. 
The reasons are as follows: 
1) Greedy-OT can achieve proper threshold from a set of offline optimal pairs. 
2) part of thresholds have poor performances in matching and thus the expected performance of Greedy-RT is affected correspondingly. 
3) simple greedy is seriously trapped in local optimal solutions in adversary model. 
Note that Greedy-RT maybe perform worse than simple greedy when less workers are available, on the ground that large number of proper pairs are filtered out by excessively small thresholds.
As for the running time and travel cost,  OPT expends much than online algorithms for the reason that min-cost max-flow graph algorithm needs much time and storage consumption on repetitively running Dijkstra algorithm and keeping running states. 
Other algorithms are subordinate to greedy algorithm family so that they are similar in terms of time and storage cost. Moreover, they are usually unrelated to the number of workers for the reason that algorithms will terminal in advance when budget is drained rather than when all workers have appeared. 

\textbf{Effect of $|T|$}. 
The second column of Fig. \ref{fg:ad} presents the results when we vary $|T|$ from 2000 to 10000. 
In terms of the quantity of matched pairs, we observe that the curves of all algorithm ascend when $|T|$ increase since there are more competent pairs to be matched. 
Greedy-OT still performs much better than Greedy-RT and simple greedy. 
Due to the number of workers, here is 6000, is abundant enough to prevent the decline in quantity that excessively small thresholds bring about. Thus, Greedy-RT could keep higher in quantity than simple greedy when $|T|$ increases.
The results on time and memory cost have similar patterns to those of varying $|W|$, and we omit the detailed discussion. 

\textbf{Effect of budget}. 
The third column of Fig. \ref{fg:ad} shows the results when we vary $B$ from 1000 to 5000. 
We can observe that the matching quantity rises when budget increases. The maximum value of optimal pairs grows when budget increases, which leads the threshold of Greedy-OT increases as well. 
Since the feature that workers with large cost appear early in adversary model, the ratio between Greedy-OT and OPT in matching size decreases when budget increases.
The running time of OPT rises with the increase of budget. The reason is that additional budget would cause optimal pairs grows in quantity. For each of pair, OPT will run Dijkstra's algorithm in a huge graph which needs expensive time cost.
As for online algorithms, their running time is still tiny even can be ignored when compared with OPT.
The memory cost of OPT has no related to the budget since they run on the identical graph structure with same vertices and edges. 

\textbf{Effect of deadline}.
The fourth column of Fig. \ref{fg:ad} depicts the results of varying tasks' deadline from 20 to 100. 
All algorithms except the simple greedy increase in matching size with the increase of deadline.
This is because that long deadline represents more candidate tasks for a new arrival worker to match, which leads to less travel cost and correspondingly a larger matching size.
The simple greedy keeps invariable in match size for the reason that the travel cost of all matched pairs is little than 20, which means all workers can reach the nearest task before its deadline.
The results on time and memory cost have similar patterns to those of varying budget, and we omit the detailed discussion.
\begin{figure} 
\centering
\subfigure[Matching size of varying $|W|$]{\includegraphics[width=0.23\textwidth]{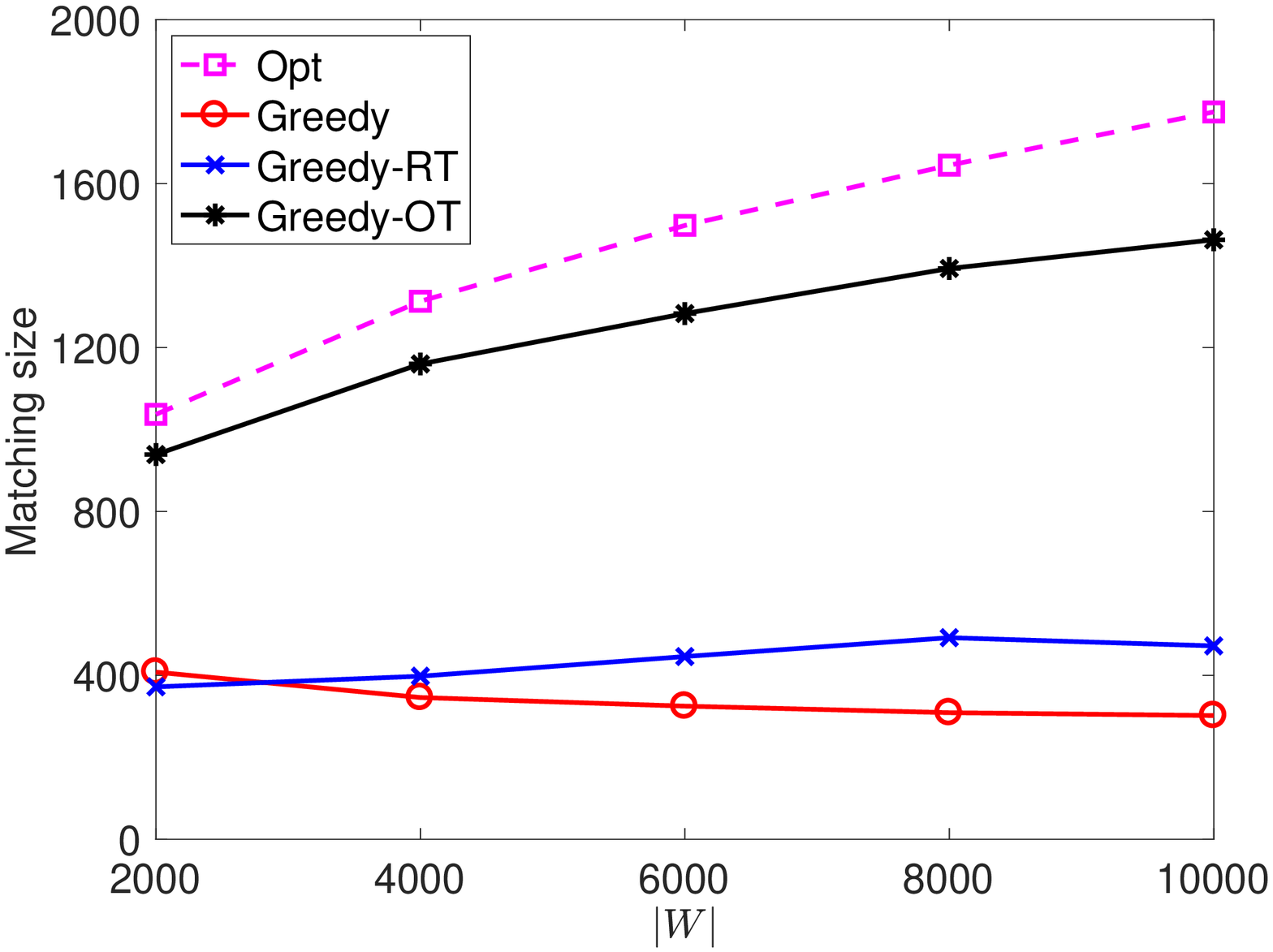}\label{fg:n_worker}}
\subfigure[Matching size of varying $|T|$]{\includegraphics[width=0.23\textwidth]{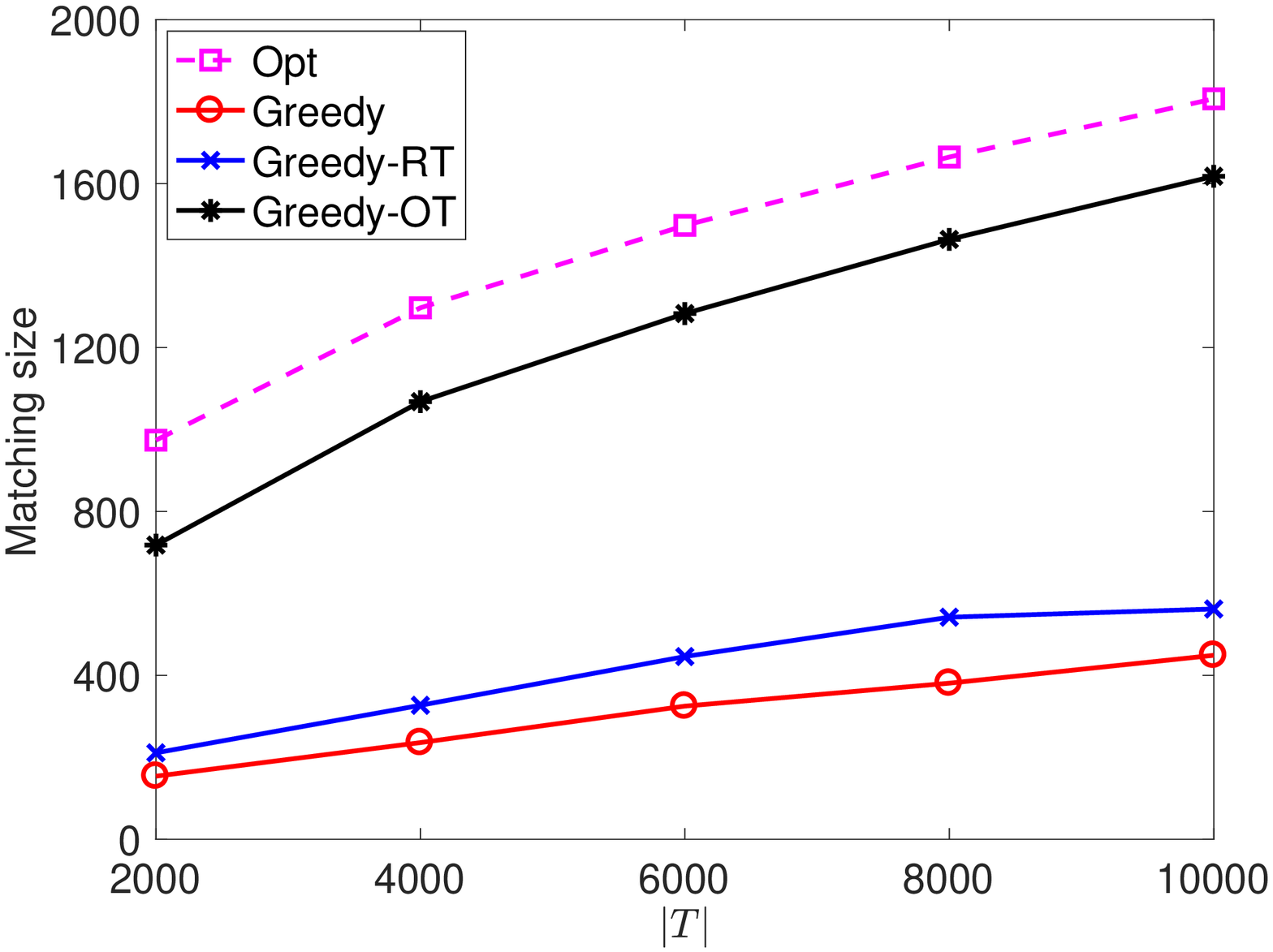}\label{fg:n_task}}
\subfigure[Matching size of varying $B$]{\includegraphics[width=0.23\textwidth]{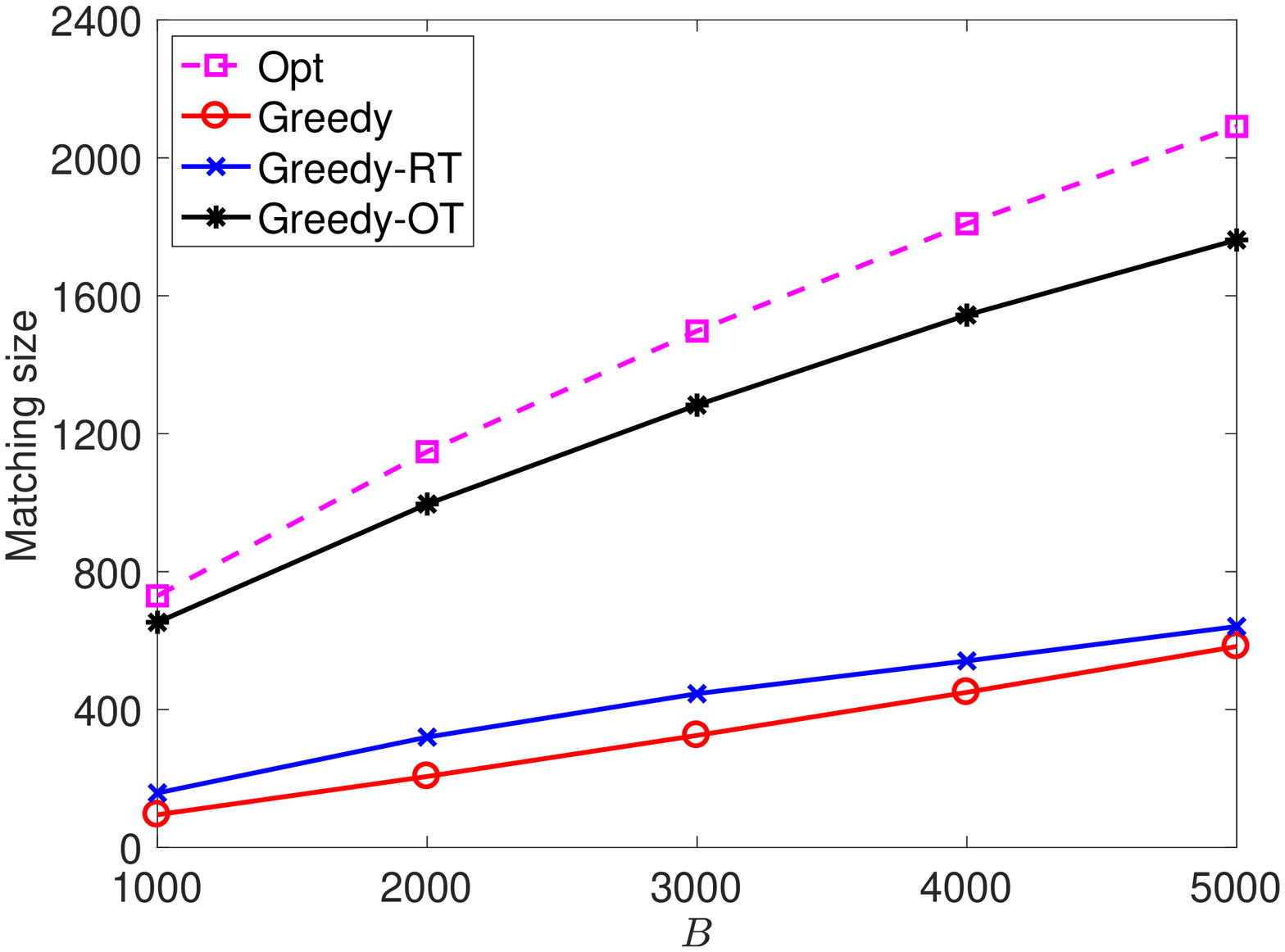}\label{fg:n_budget}}
\subfigure[Matching size of varying $d_t$]{\includegraphics[width=0.23\textwidth]{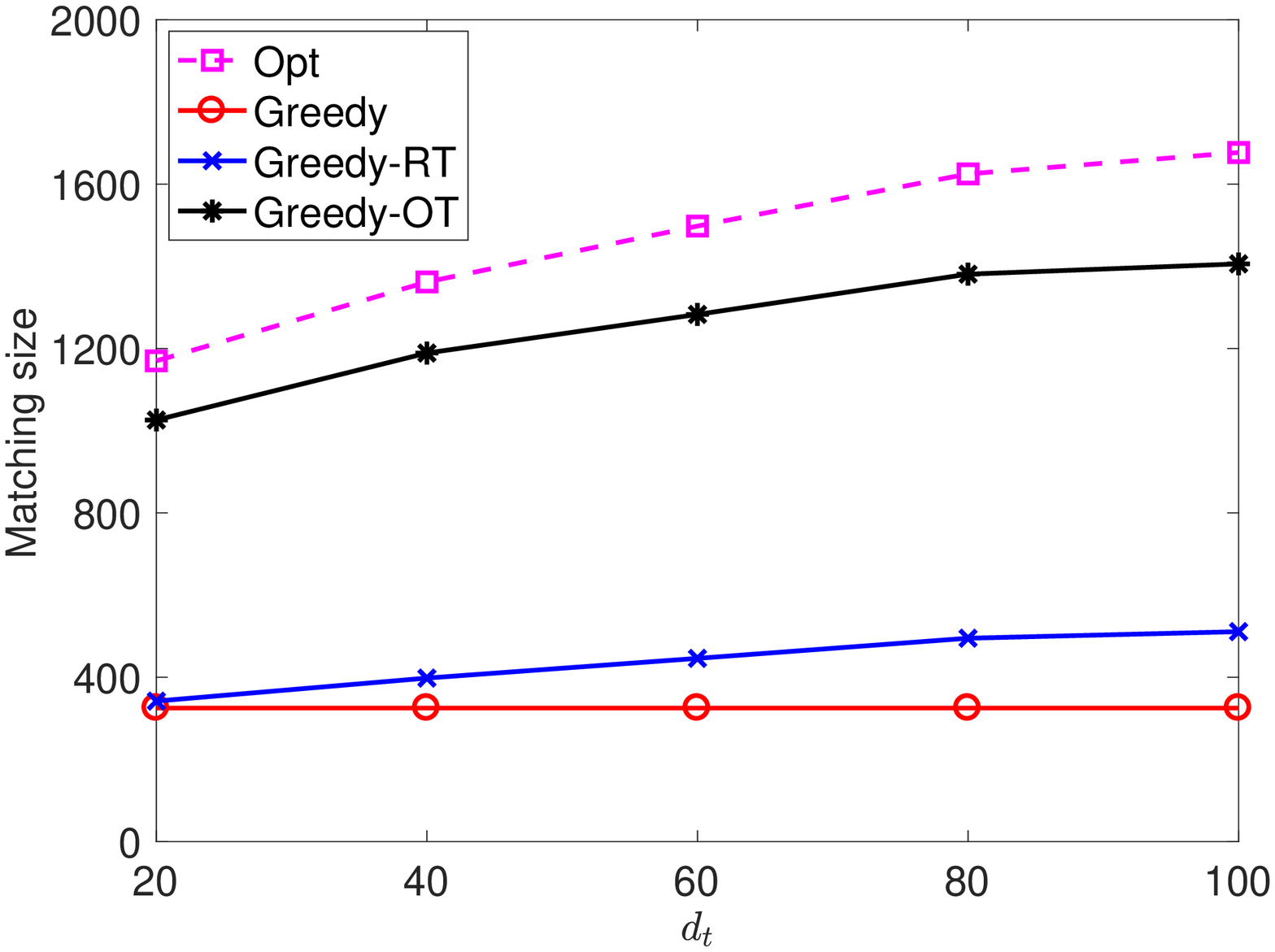}\label{fg:n_deadline}}
\vfill
\subfigure[Time of varying $|W|$]{\includegraphics[width=0.23\textwidth]{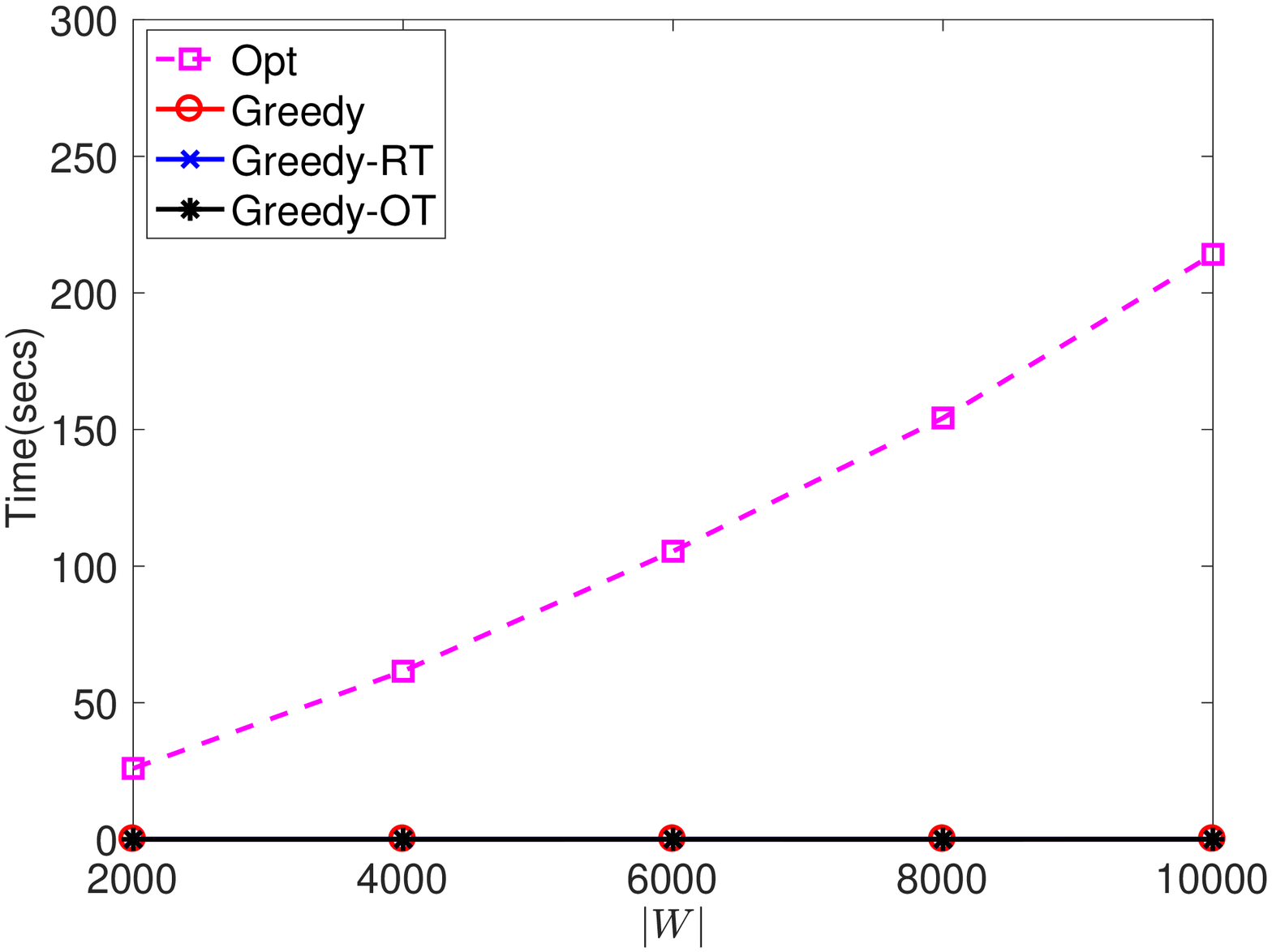}\label{fg:t_worker}}
\subfigure[Time of varying $|T|$]{\includegraphics[width=0.23\textwidth]{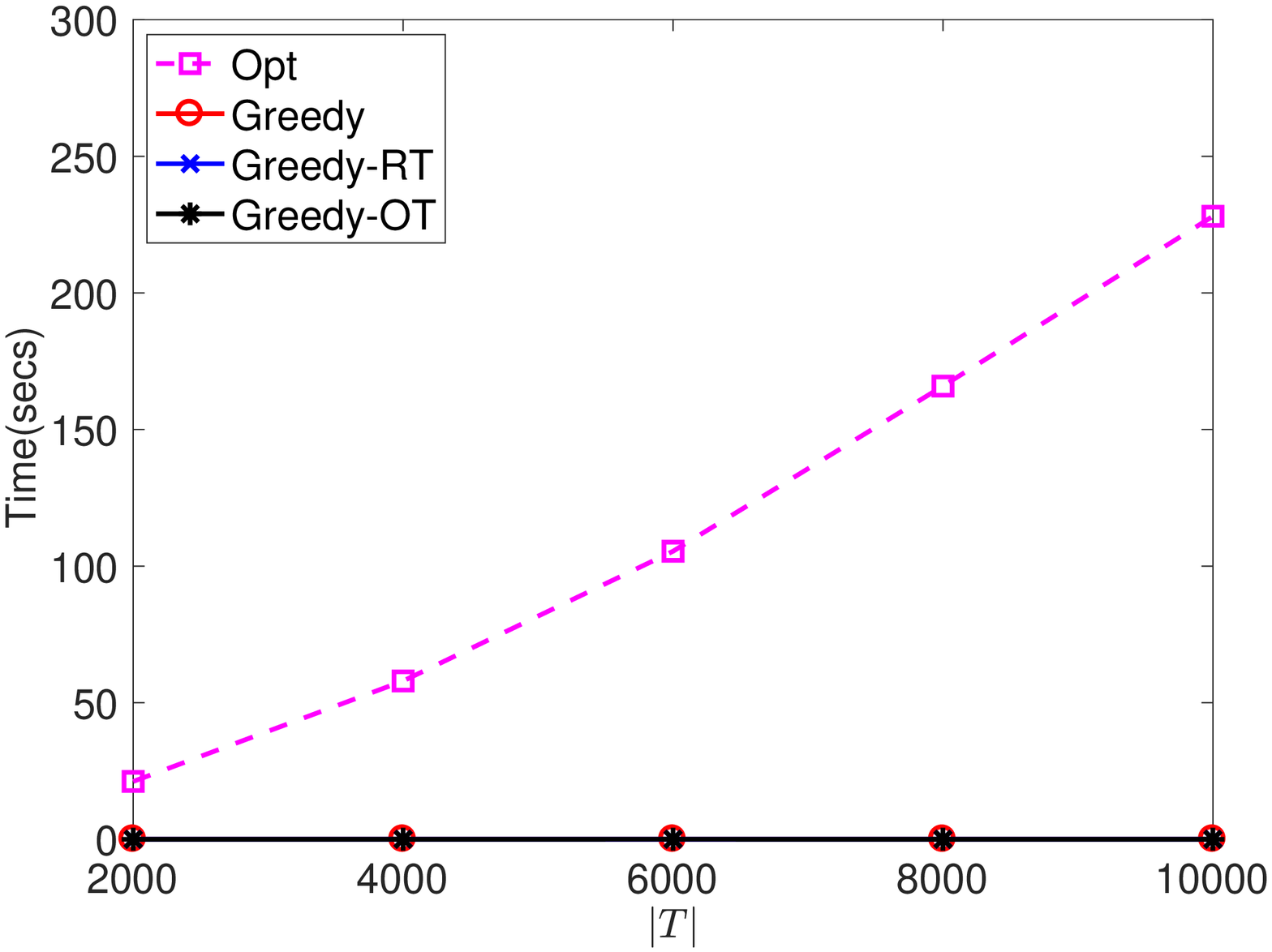}\label{fg:t_task}}
\subfigure[Time of varying $B$]{\includegraphics[width=0.23\textwidth]{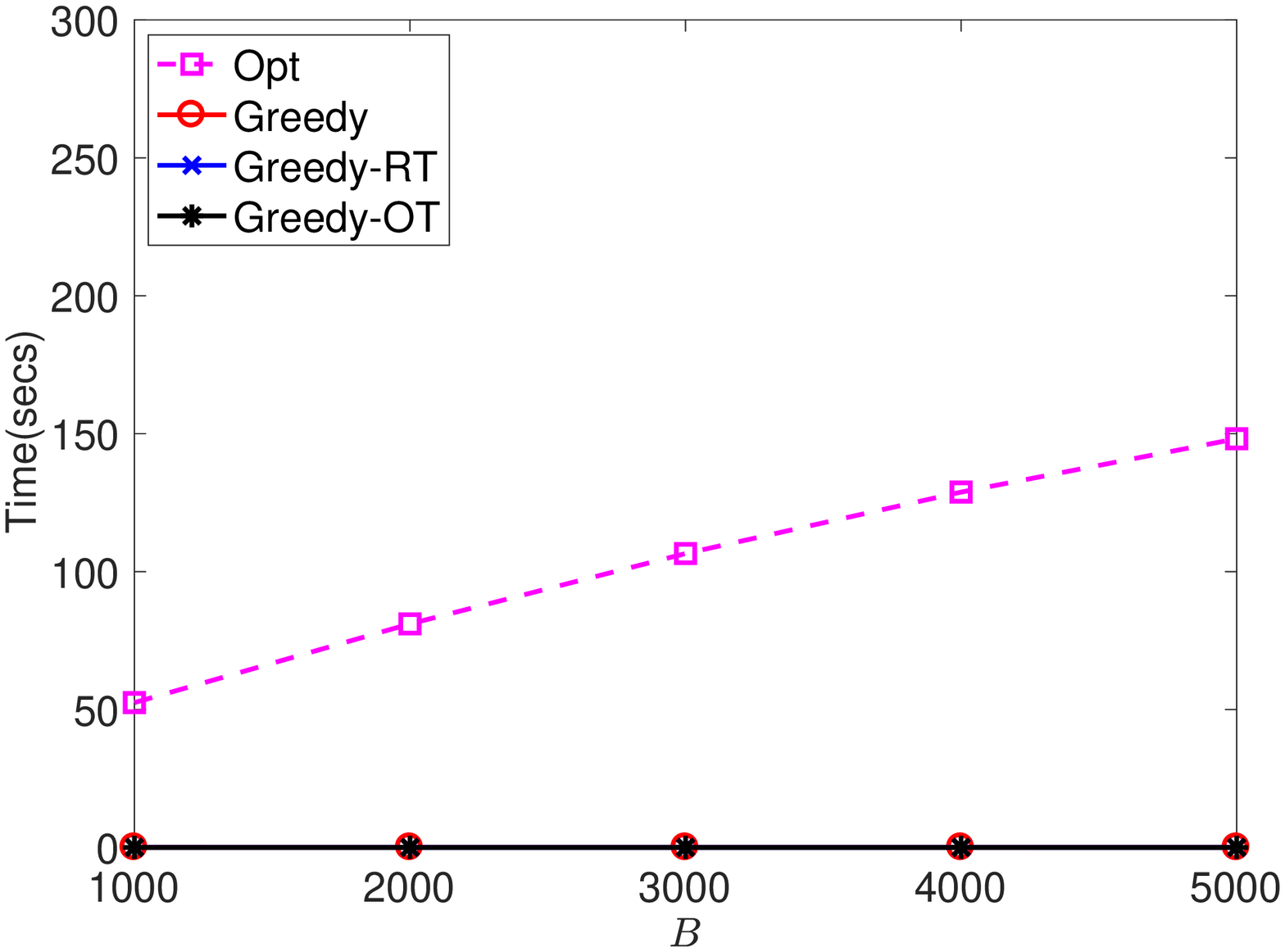}\label{fg:t_budget}}
\subfigure[Time of varying $d_t$]{\includegraphics[width=0.23\textwidth]{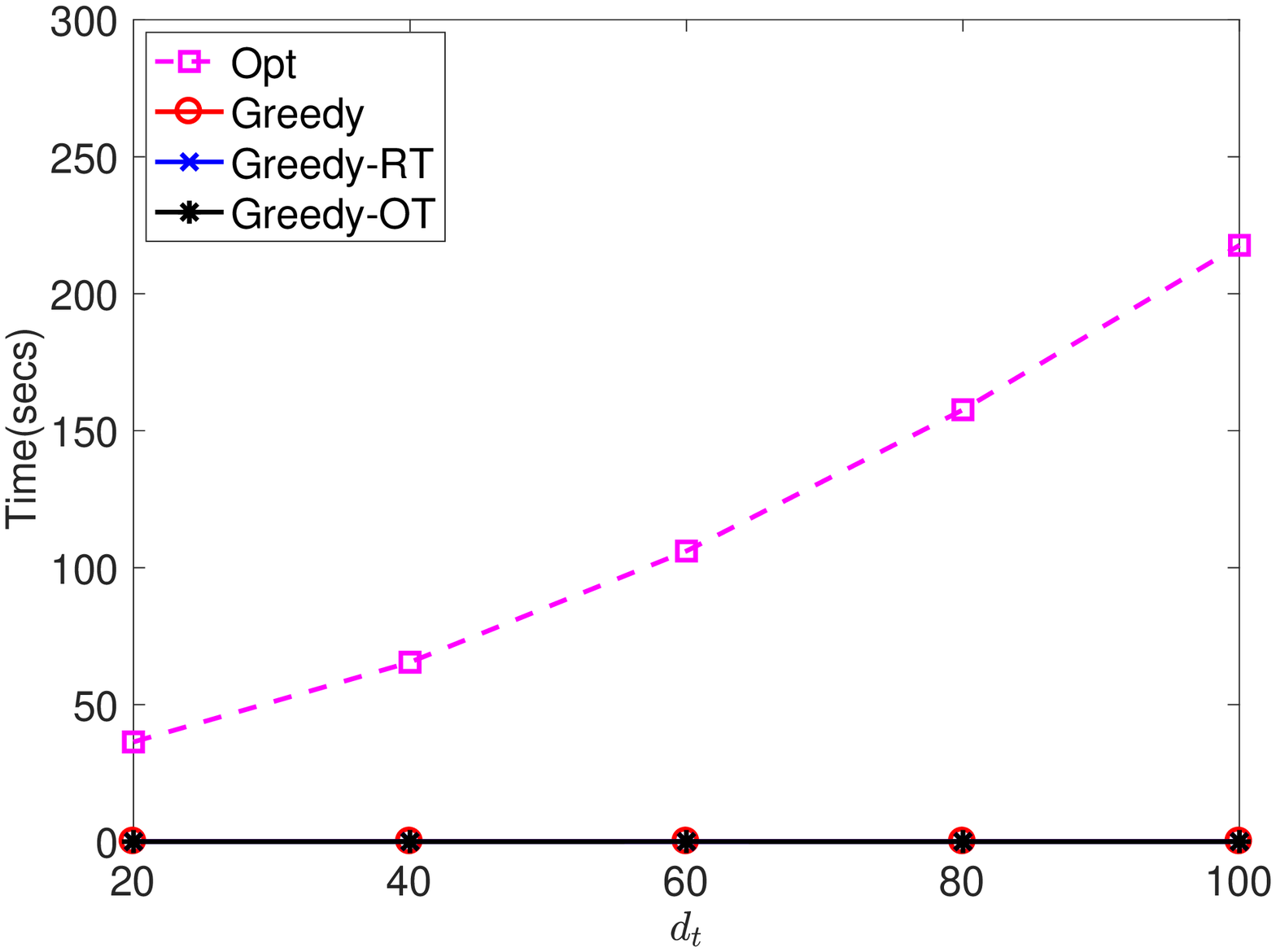}\label{fg:t_deadline}}
\vfill
\subfigure[Memory of varying $|W|$]{\includegraphics[width=0.23\textwidth]{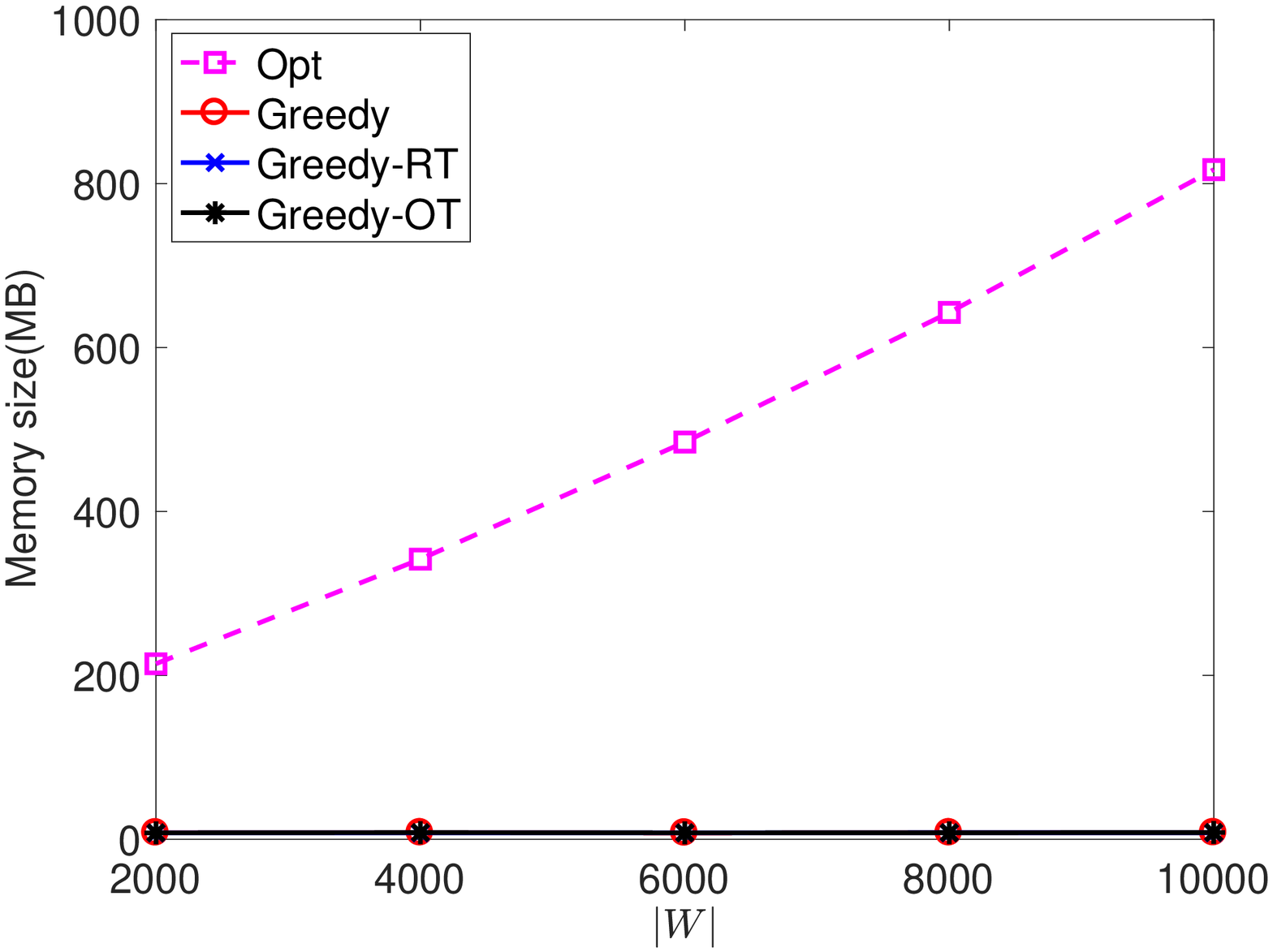}\label{fg:m_worker}}
\subfigure[Memory of varying $|T|$]{\includegraphics[width=0.23\textwidth]{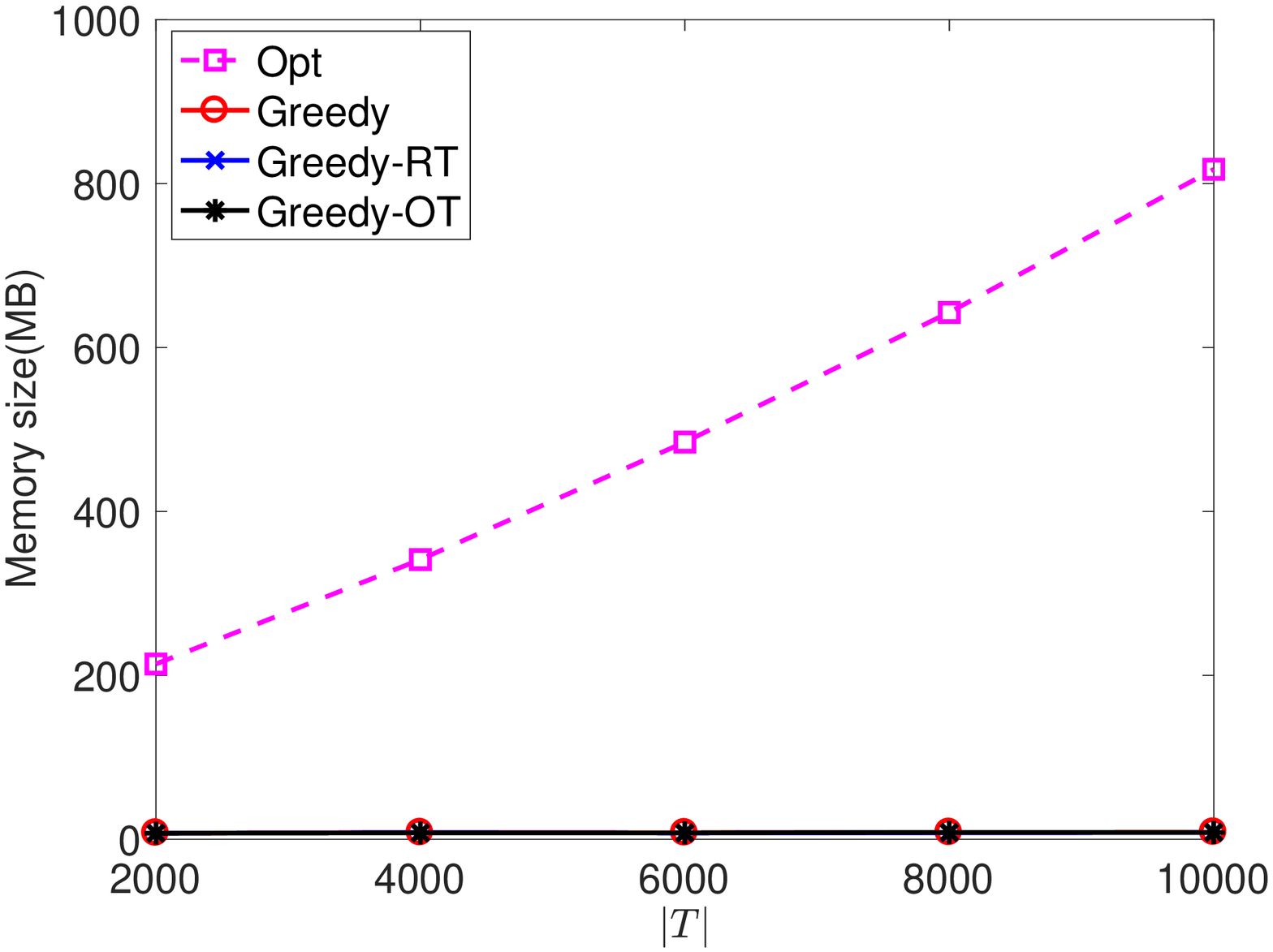}\label{fg:m_task}}
\subfigure[Memory of varying $B$]{\includegraphics[width=0.23\textwidth]{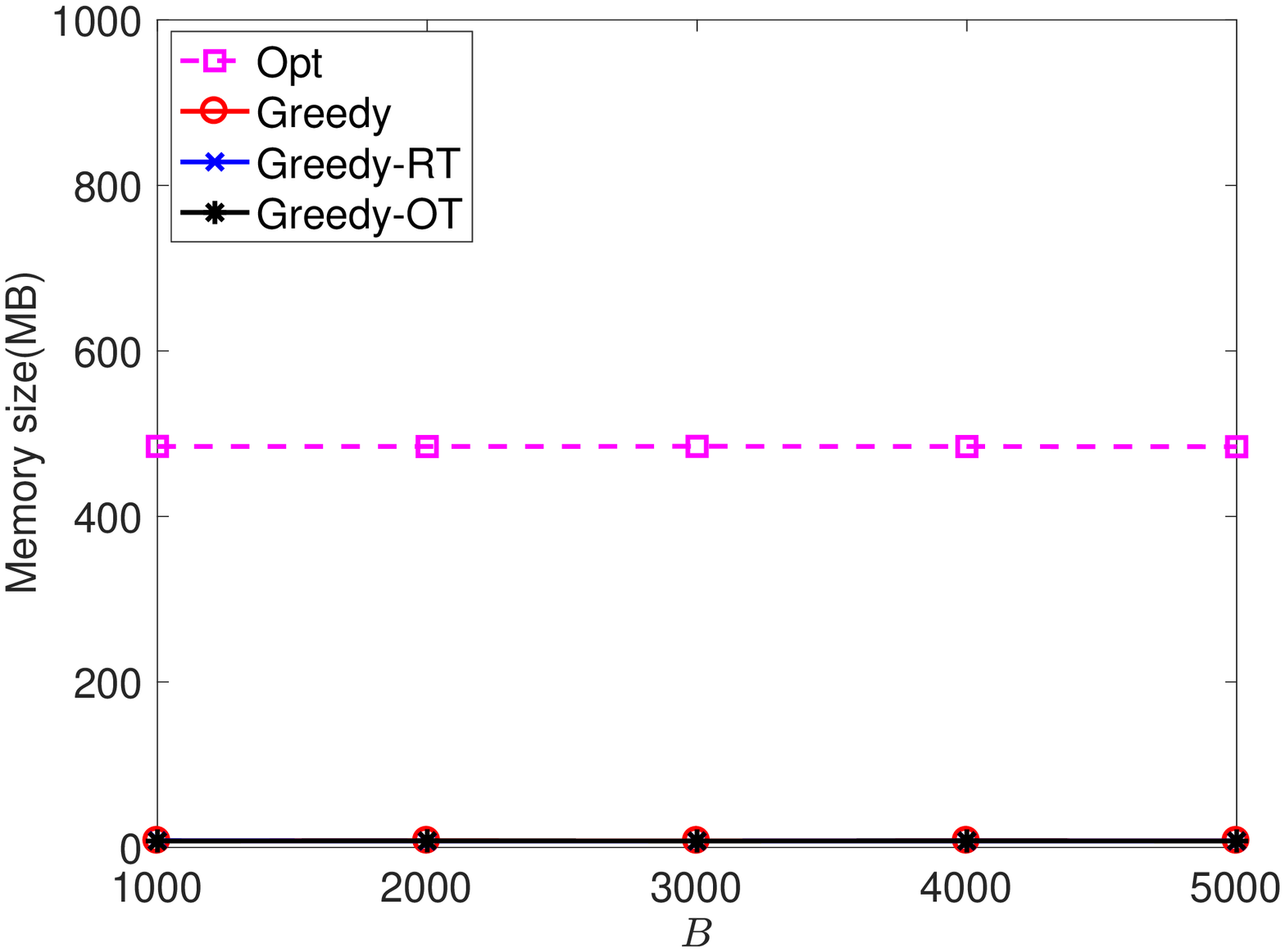}\label{fg:m_budget}}
\subfigure[Memory of varying $d_t$]{\includegraphics[width=0.23\textwidth]{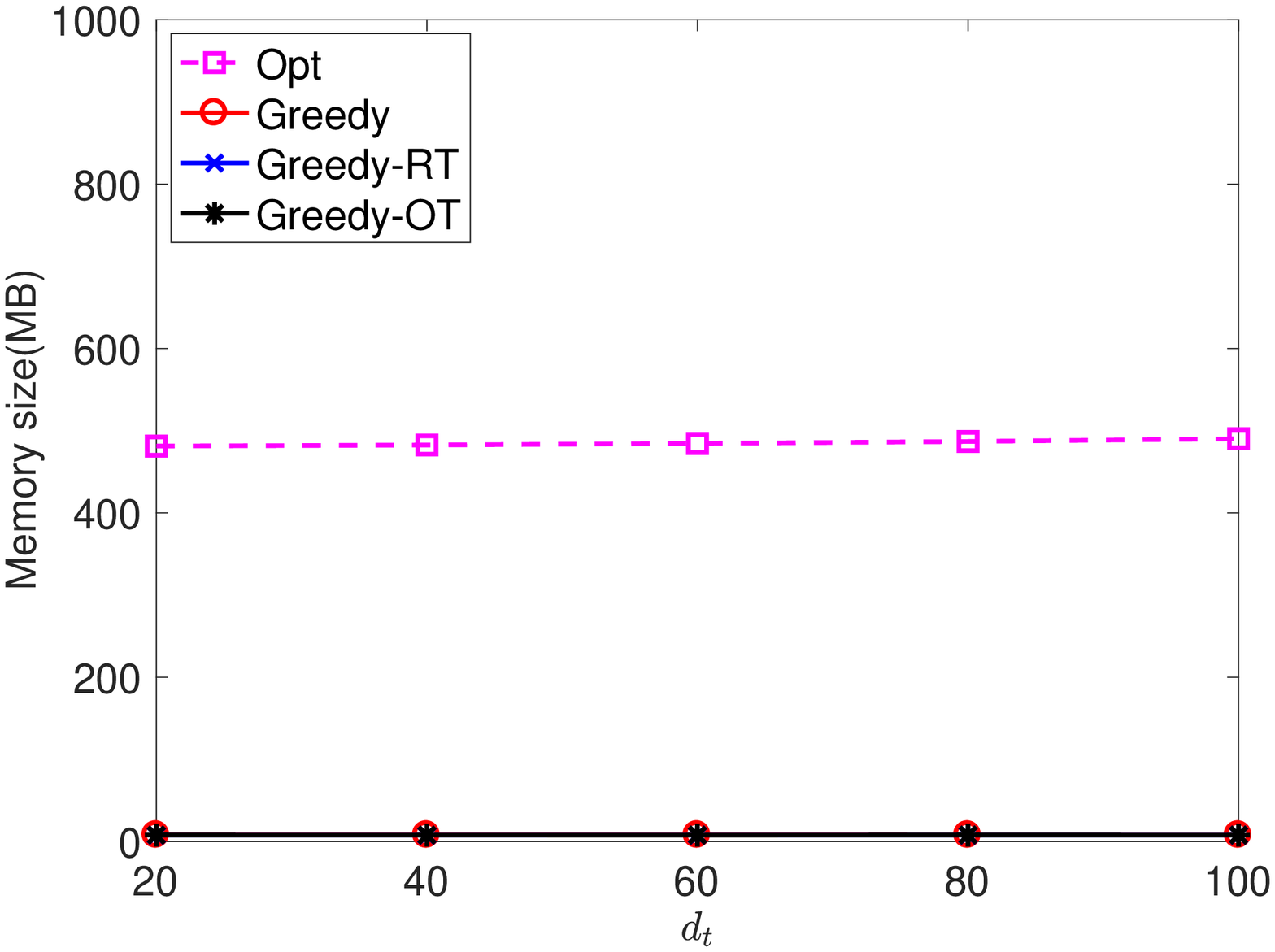}\label{fg:m_deadline}}
\caption{Results on varying each specific parameter in adversary model}
\label{fg:ad}
\end{figure}
\subsection{Random Model}
In this section, we will analysis the performance of all algorithms run in random model. 
As Fig. \ref{fg:rd} depicts, all algorithms perform highly similar to adversary model in terms of matching size, so we merely analysis the discrepancies between Fig. \ref{fg:ad} and \ref{fg:rd}.
First, we can observe that the curve of Greedy-OT in random model is more asymptotic to the optimal curve than in adversary model, which means the Greedy-OT performs better in random model. 
Since adversary model has the feature that high cost workers have prior in appearance, workers whom Greedy-OT chooses have larger travel cost which consumes budget highly and consequently decreases matching size.
Second, the gap of matching size between Greedy-RT and simple greedy is more narrow in random model. 
The main reason is that simple greedy improves more significant than Greedy-RT in random model in terms of matching size.
Third, compared with Greedy-RT, simple greedy may perform better in certain conditions such as less available workers or sufficient budget. 
Specifically, as Fig. \ref{fg:n_worker} shown, Greedy-RT works well when $|W|$ equals 2000 and 4000. The reasons are as follows: 1) small thresholds Greedy-RT selects decrease its matching size; 2) simple greedy performs better in random model than adversary model. We also find that simple greedy outperforms than Greedy-RT when budget beyonds 4000 from Fig. \ref{fg:n_budget}. The reasons are as follows: 1) small thresholds still filter out many proper pairs even if the budget is sufficient. 2) those well-performance thresholds promote matching size slowly, or even reach the limits of matching size.
The results on time and memory cost in random model are similar to adversary model, and we omit the detailed discussion.
\begin{figure}
\centering
\subfigure[Matching size of varying $|W|$]{\includegraphics[width=0.23\textwidth]{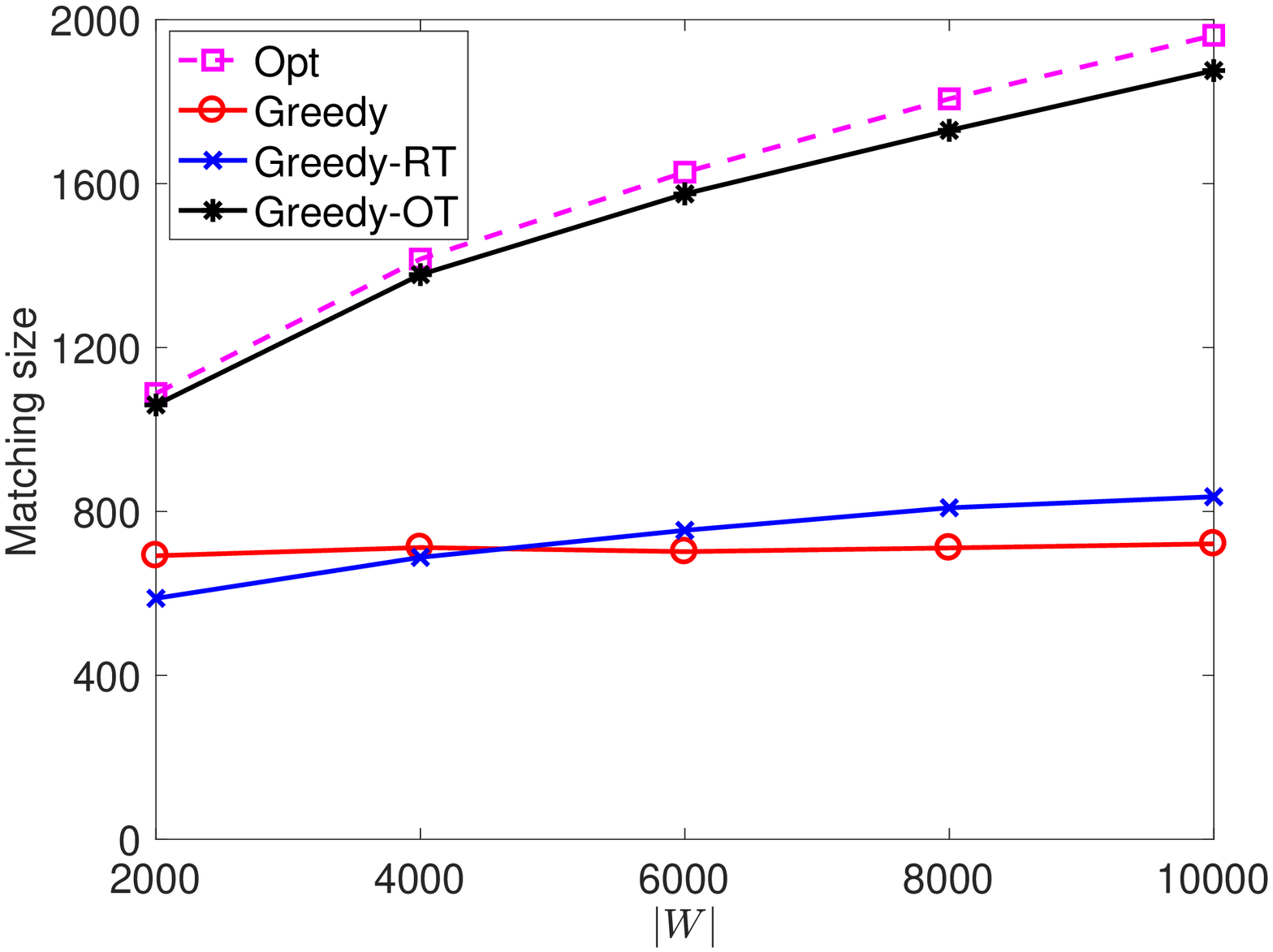}\label{fg:n_worker}}
\subfigure[Matching size of varying $|T|$]{\includegraphics[width=0.23\textwidth]{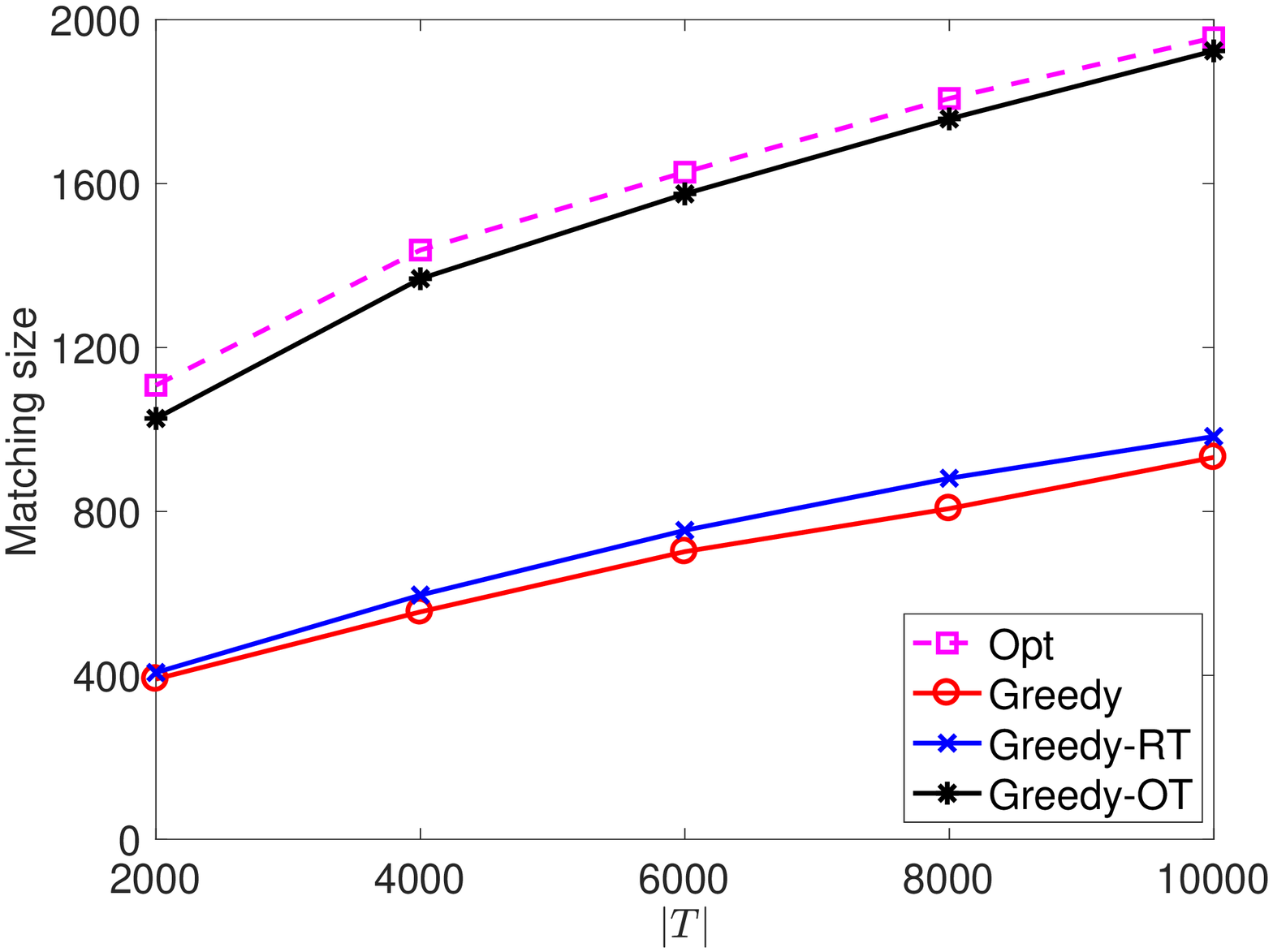}\label{fg:n_task}}
\subfigure[Matching size of varying $B$]{\includegraphics[width=0.23\textwidth]{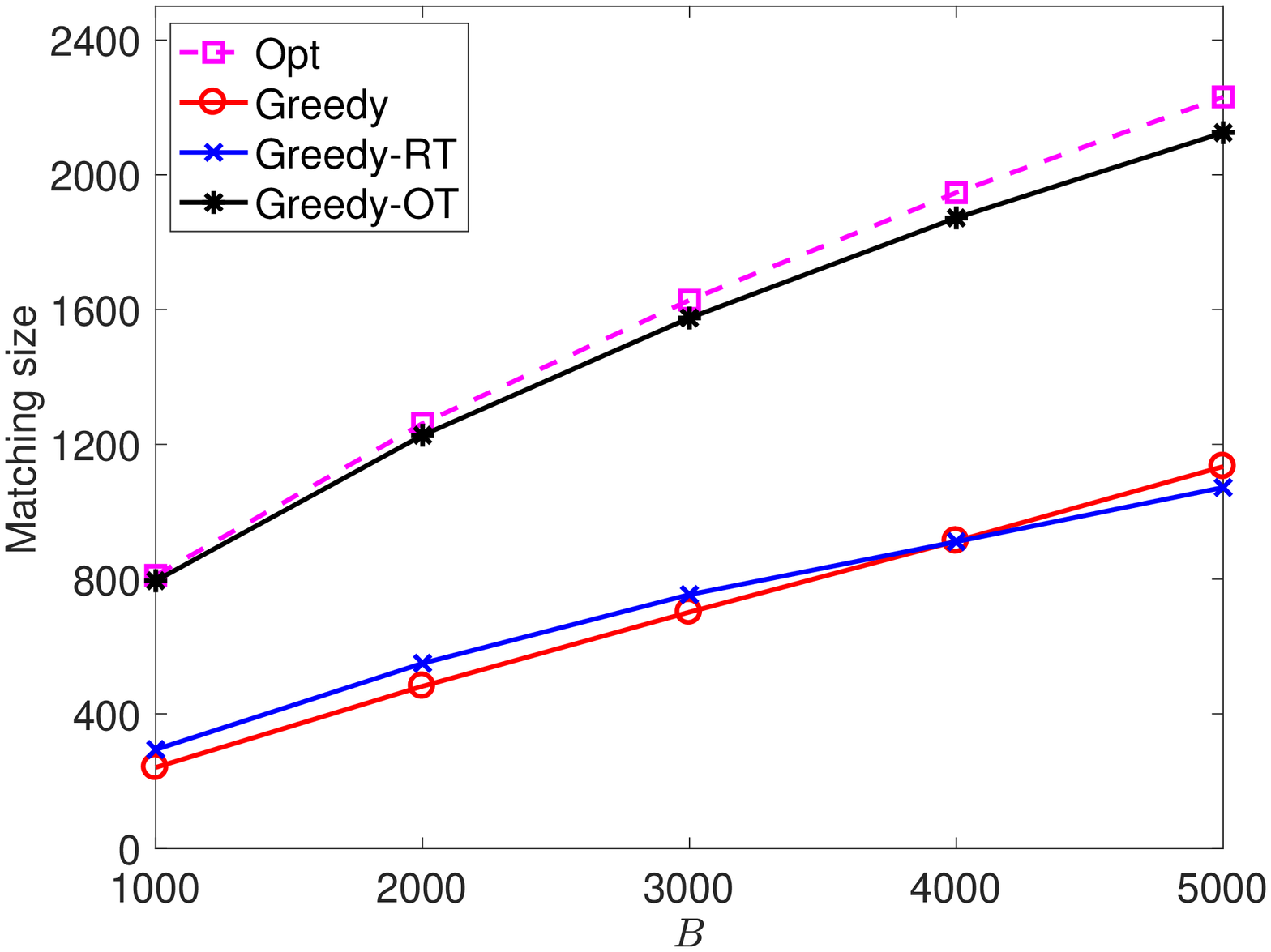}\label{fg:n_budget}}
\subfigure[Matching size of varying $d_t$]{\includegraphics[width=0.23\textwidth]{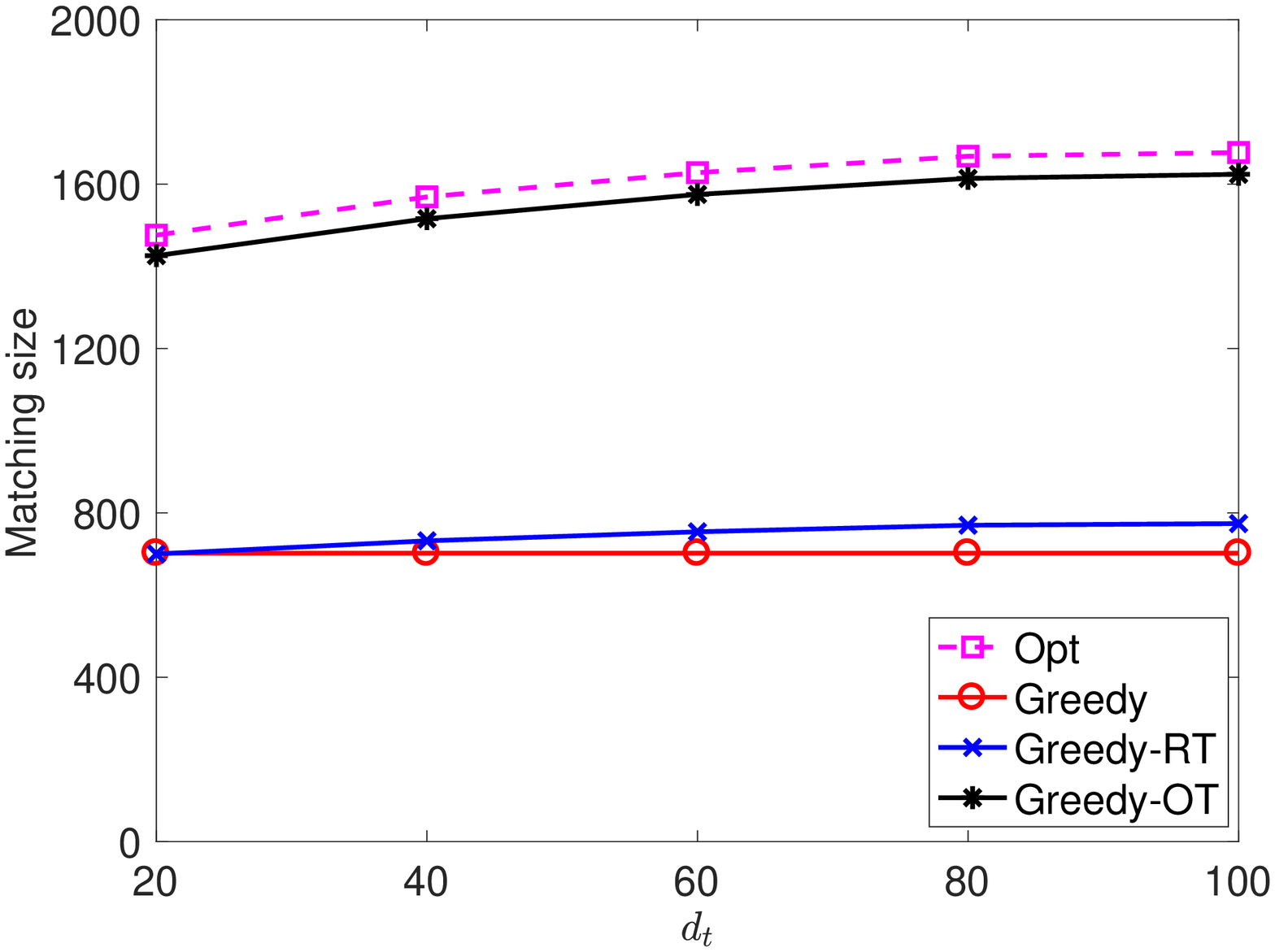}\label{fg:n_deadline}}
\vfill
\subfigure[Time of varying $|W|$]{\includegraphics[width=0.23\textwidth]{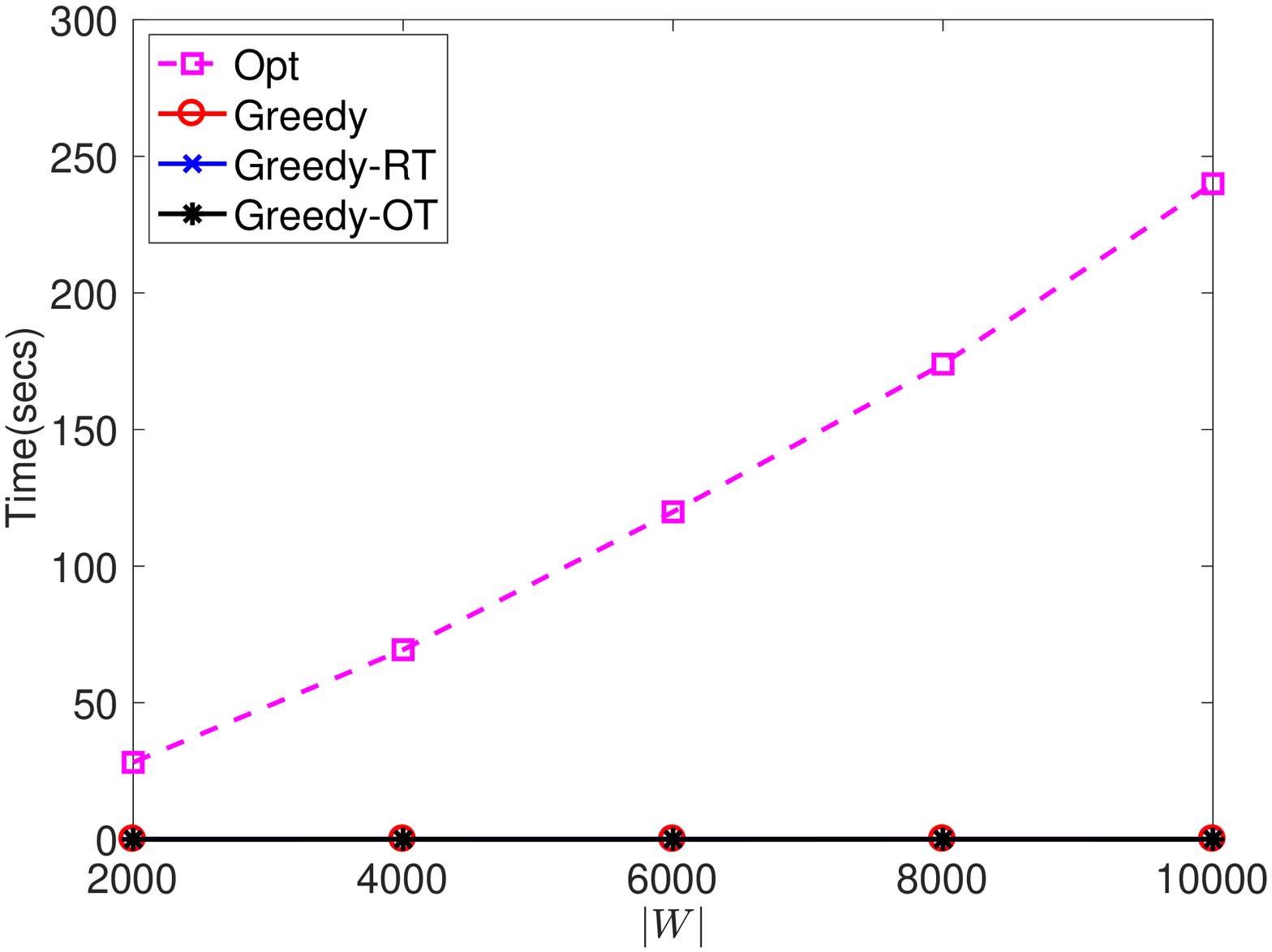}\label{fg:t_worker}}
\subfigure[Time of varying $|T|$]{\includegraphics[width=0.23\textwidth]{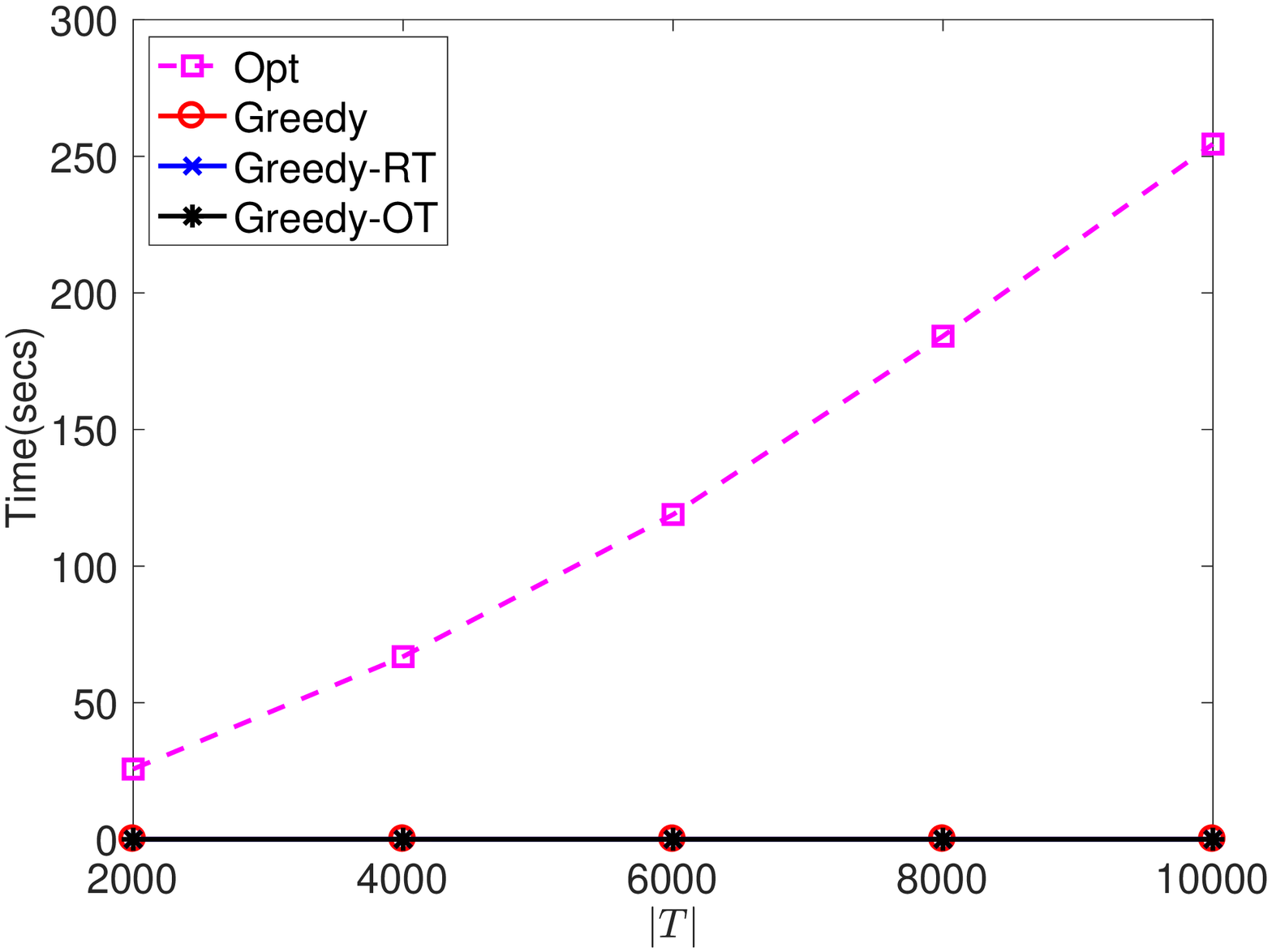}\label{fg:t_task}}
\subfigure[Time of varying $B$]{\includegraphics[width=0.23\textwidth]{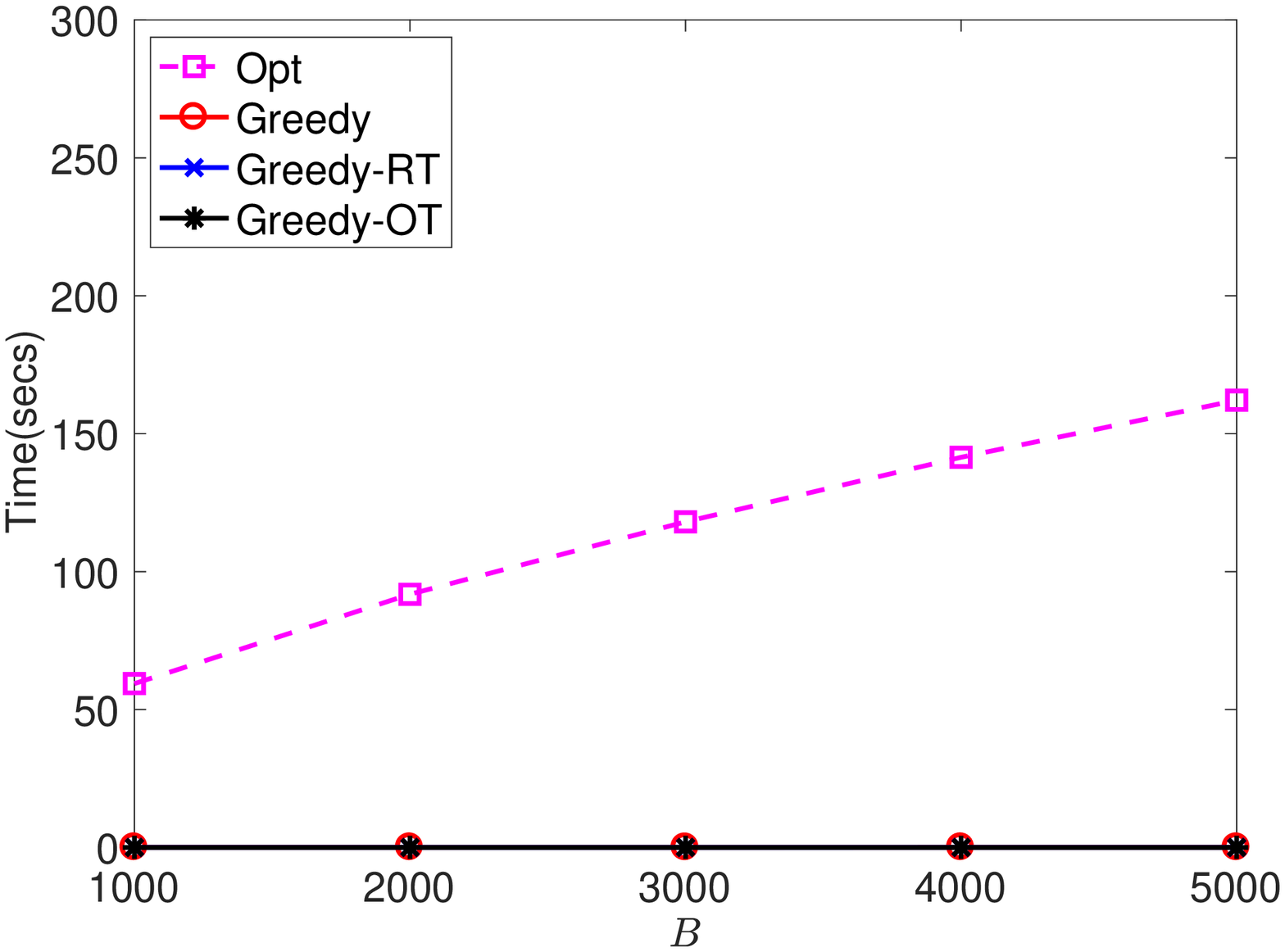}\label{fg:t_budget}}
\subfigure[Time of varying $d_t$]{\includegraphics[width=0.23\textwidth]{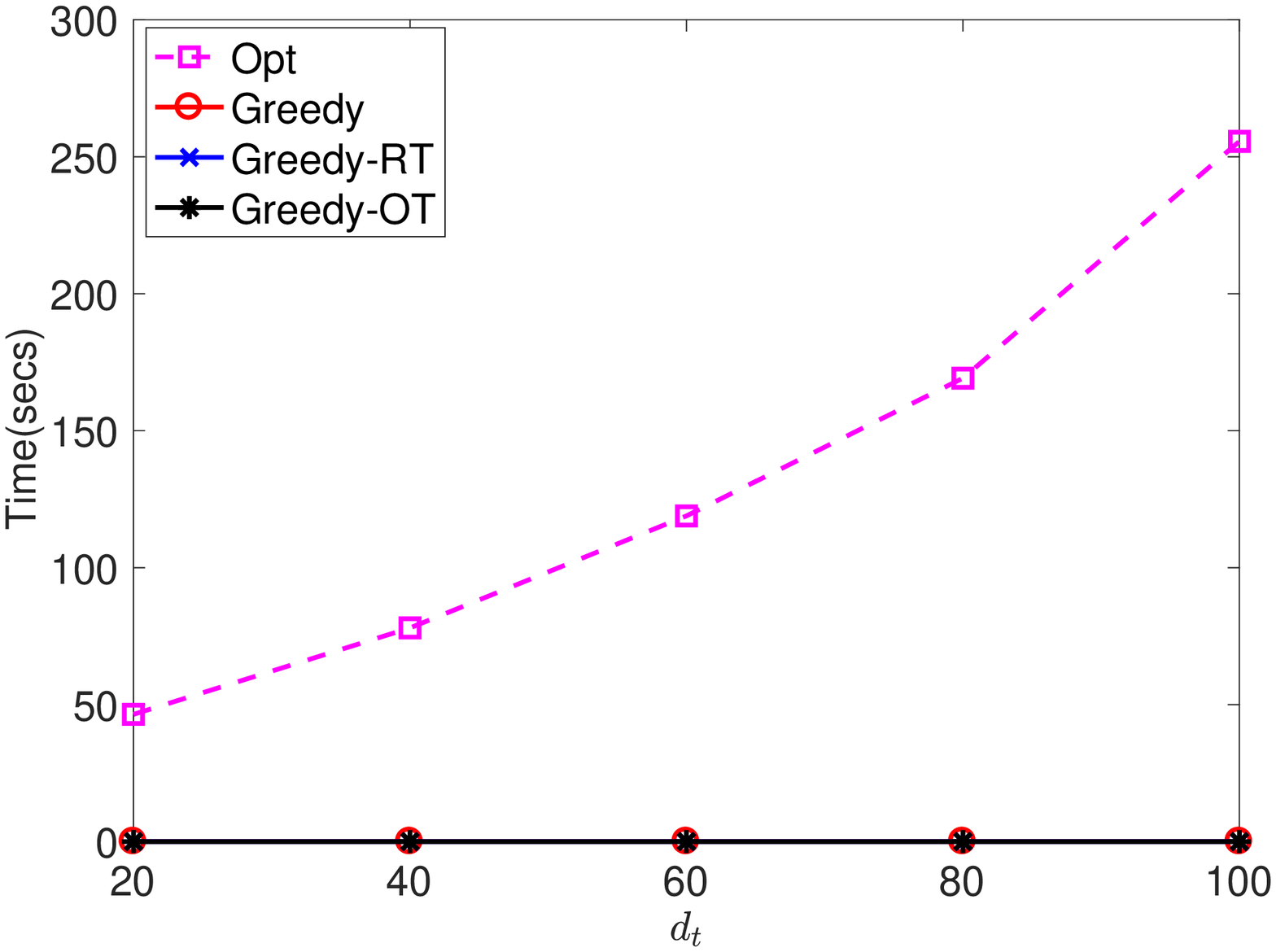}\label{fg:t_deadline}}
\vfill
\subfigure[Memory of varying $|W|$]{\includegraphics[width=0.23\textwidth]{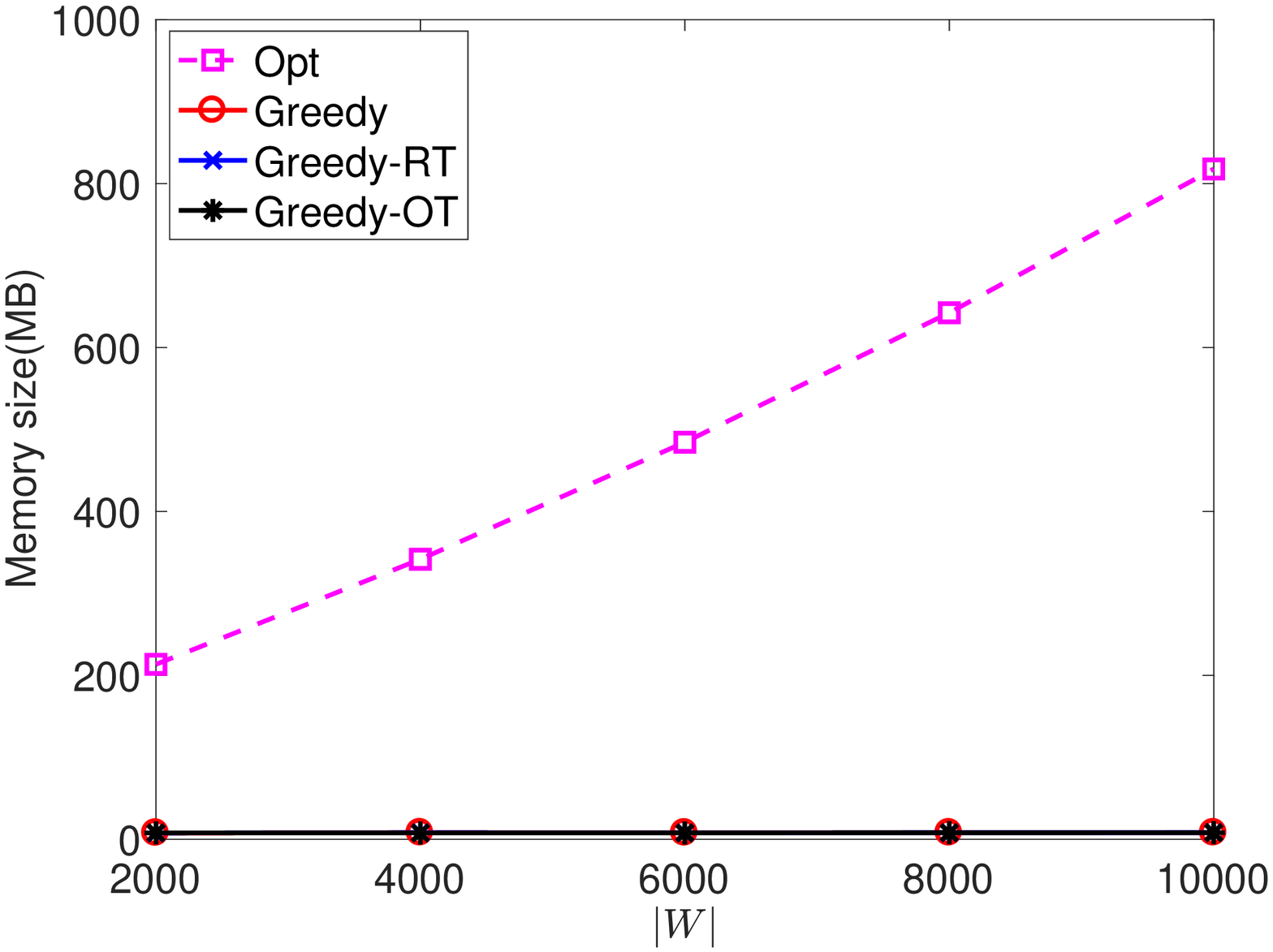}\label{fg:m_worker}}
\subfigure[Memory of varying $|T|$]{\includegraphics[width=0.23\textwidth]{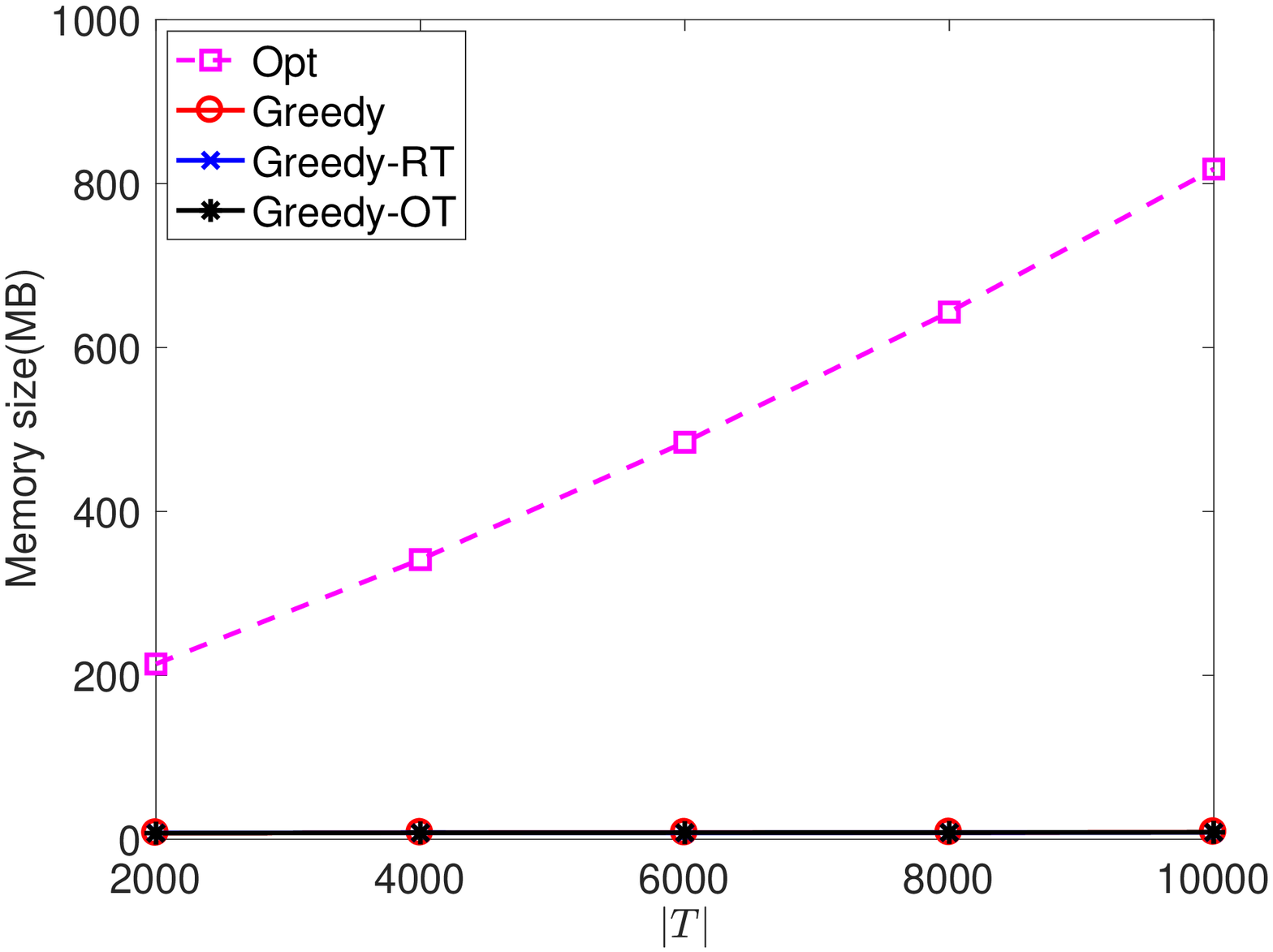}\label{fg:m_task}}
\subfigure[Memory of varying $B$]{\includegraphics[width=0.23\textwidth]{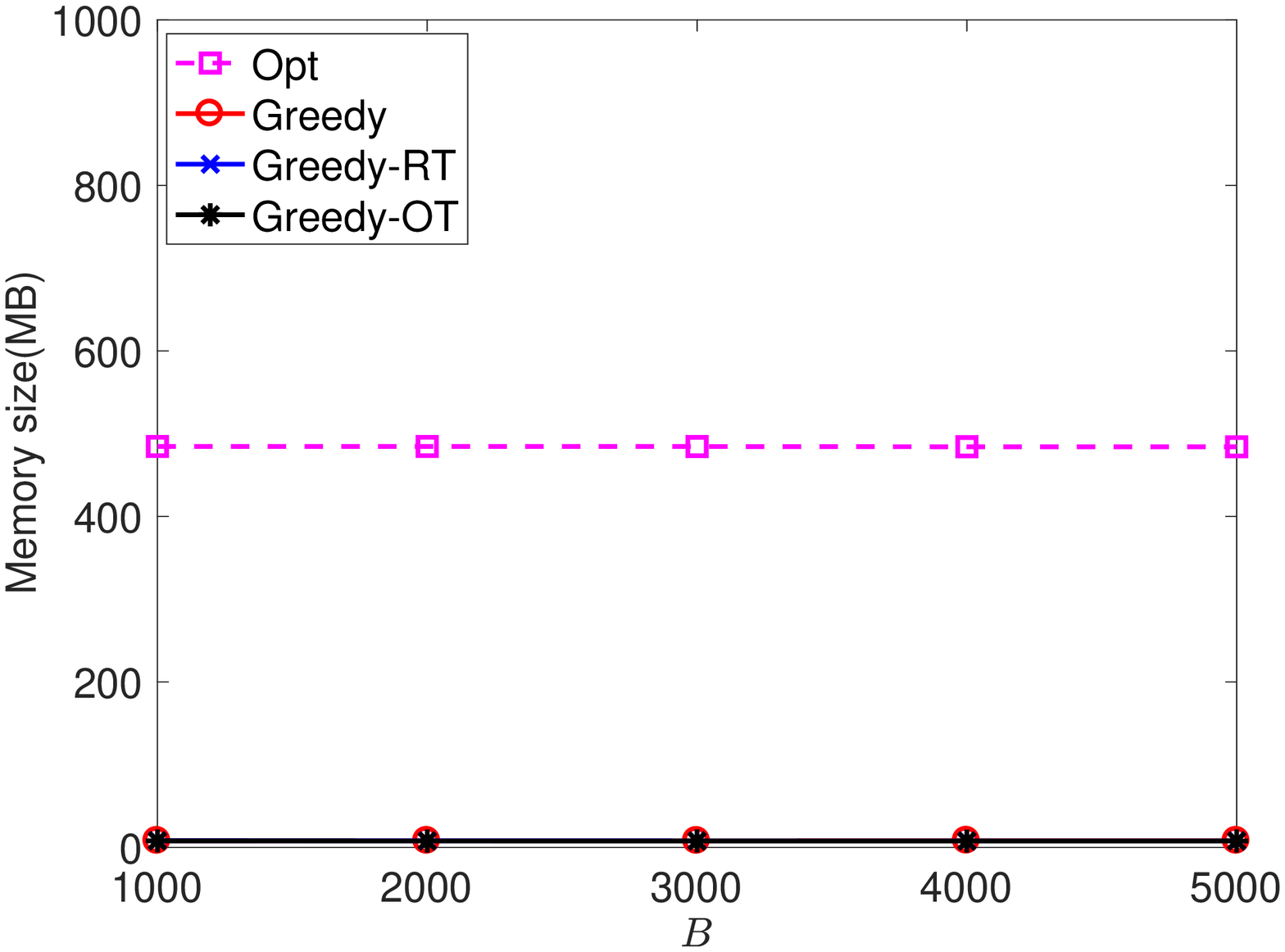}\label{fg:m_budget}}
\subfigure[Memory of varying $d_t$]{\includegraphics[width=0.23\textwidth]{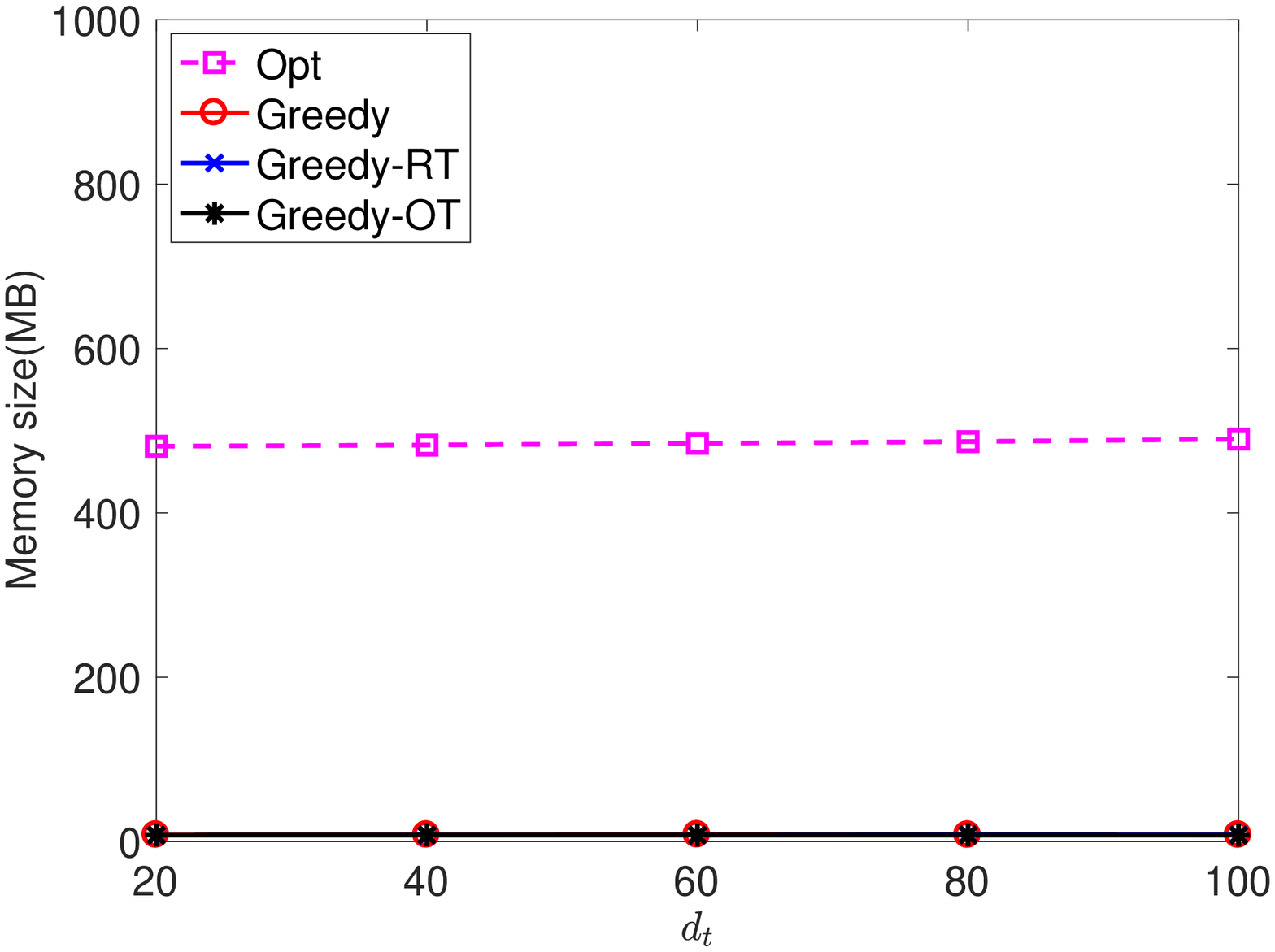}\label{fg:m_deadline}}
\caption{Variation of matching size on each specific parameter in random model }
\label{fg:rd}
\end{figure}
\subsection{Real Dataset}
We next present the experiment results on real dataset. 
As table \ref{tb:real} shows, Greedy-OT is still the best in matching size among all online algorithms, followed by Greedy-RT and simple greedy. 
Greedy-OT achieves approximately 70 percent of matching sizes compared with OPT, which is at least 1.5 times than other greedy algorithms. 
As for time and memory cost, OPT is still the most heavyweight algorithm. 
Especially in our case, the velocity of a new arrival worker is high enough that he can reach most of tasks before their deadline. Consequently, large number of edges is added when building flow graph and time cost increases dramatically. 
Compared with OPT, the other algorithms have a tiny time and memory cost.
The 2nd and 3rd column in Table \ref{tb:real} illustrate that the quantity of pickups(workers) has no significant impaction on the optimal threshold. The optimal thresholds during the whole week fluctuate less than 15 percent even if number of pickups varies in a large magnitude from 9349 to 4851, which demonstrates Greedy-OT we propose is robust. 

\begin{table}
\newcommand{\tabincell}[2]{\begin{tabular}{@{}#1@{}}#2\end{tabular}}
\caption{Matched quantity, time and memory test on Uber dataset}
\label{tb:real}
\centering
\begin{tabular}{|c|c|c|c|c|c|c|c|} \hline
Date&\tabincell{c}{No. Pickups\\(workers)}&\tabincell{c}{Opt Threshold\\(Km)}&Measures&OPT&Greedy&Greedy-RT&Greedy-OT\\\hline
\multirow{3}*{\tabincell{c}{May 5\\(Mon)}}&\multirow{3}*{5682}&\multirow{3}*{0.7956}&quantity&1100&513&569&822\\\cline{4-8}
~&~&~&time(secs)&649.254&0.088&0.067&0.045\\\cline{4-8}
~&~&~&memory(KB)&531956&8224&8010&8396\\\hline
\multirow{3}*{\tabincell{c}{May 6\\(Tue)}}&\multirow{3}*{5818}&\multirow{3}*{0.829}&quantity&1076&374&458&786\\\cline{4-8}
~&~&~&time(secs)&643.348&0.078&0.064&0.044\\\cline{4-8}
~&~&~&memory(KB)&543996&8216&8129&7908\\\hline
\multirow{3}*{\tabincell{c}{May 7\\(Wed)}}&\multirow{3}*{6065}&\multirow{3}*{\textbf{0.8356}}&quantity&1093&433&501&776\\\cline{4-8}
~&&&time(secs)&667.189&0.077&0.067&0.045\\\cline{4-8}
~&&&memory(KB)&563452&7760&8241&7872\\\hline
\multirow{3}*{\tabincell{c}{May 8\\(Thu)}}&\multirow{3}*{9349}&\multirow{3}*{0.7848}&quantity&1154&406&485&802\\\cline{4-8}
~&&&time(secs)&1086.118&0.110&0.092&0.060\\\cline{4-8}
~&&&memory(KB)&872352&8036&8054&8032\\\hline
\multirow{3}*{\tabincell{c}{May 9\\(Fri)}}&\multirow{3}*{7377}&\multirow{3}*{0.7762}&quantity&1135&385&458&763\\\cline{4-8}
~&&&time(secs)&852.985&0.091&0.074&0.044\\\cline{4-8}
~&&&memory(KB)&685856&7884&8164&7960\\\hline
\multirow{3}*{\tabincell{c}{May 10\\(Sat)}}&\multirow{3}*{4851}&\multirow{3}*{0.7362}&quantity&1196&369&458&812\\\cline{4-8}
~&&&time(secs)&711.796&0.072&0.056&0.034\\\cline{4-8}
~&&&memory(KB)&471152&8104&7882&8260\\\hline
\multirow{3}*{\tabincell{c}{May 11\\(Sun)}}&\multirow{3}*{5276}&\multirow{3}*{0.7277}&quantity&1190&346&442&802\\\cline{4-8}
~&&&time(secs)&813.825&0.067&0.054&0.016\\\cline{4-8}
~&&&memory(KB)&512360&8136&7963&8308\\\hline
\end{tabular}
\end{table}
In order to illustrate the reason why Greedy-OT can apply the historical data (May 7) to filter or process the data of other days, we draw the distribution of matching size under different travel cost from the raw optimal matching data in May 6 (the previous day), 7(current day), 8(the next day) and 14(wed of the next week). As shown in Fig. \ref{fg:trend}, all curves share a similar trend with the increase of travel cost, which dramatically rises up to their peaks and then gradually drops in the nadir. Further, We separate the whole travel cost into several segments, each of which is 0.1km, and then we can find that matching size during each cost segment is similar. For example, the average of matching size in segment [0.2,0.3] values 18.7, 19, 20.7 and 18.9 in May 6, 7, 8, 14 respectively. Thus, in the event that Greedy-OT extracts its threshold from the optimal matching scheme in May 7, it can also perform well for other days.
\begin{figure}
\centering
\caption{Similar distribution on the travel cost for different days}
\label{fg:trend}
\includegraphics[scale=0.6]{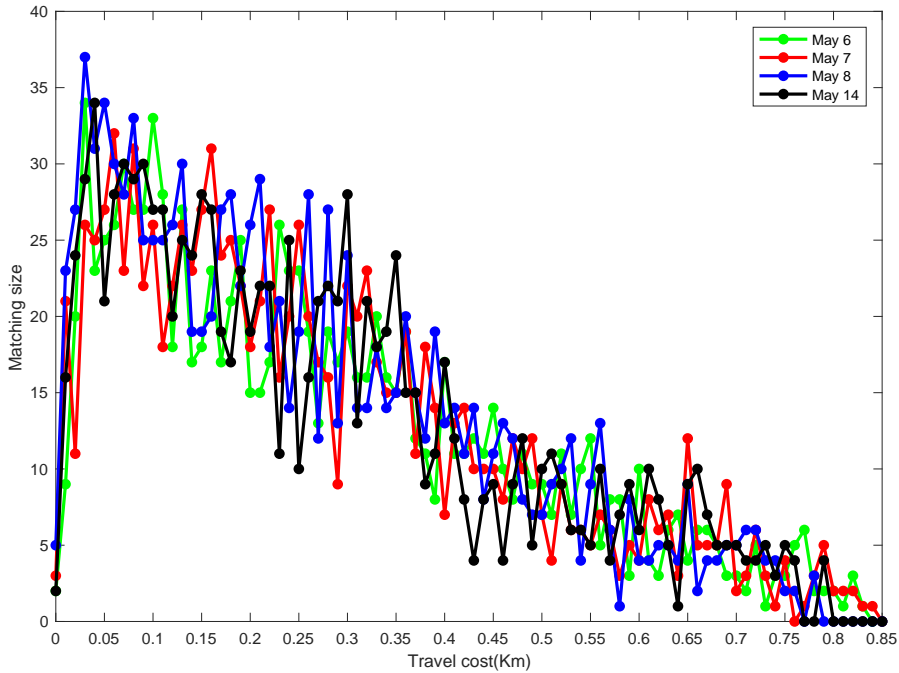}
\end{figure}

\section{Related work}
In this section, we review related works from two categories, task assignment with budget constraint and online task assignment.

\textbf{Task assignment with budget constraint}. Many studies lay stress on the budget constrained assignment problem in web crowdsourcing. 
Most of them take worker's reputation or reliability into consideration, and aim to achieve truthful and guaranteed answers.
\cite{Karger2013Budget} takes account the confidence in answers that crowd workers submit, and aims to minimize the budget to meet a certain target reliability.
\cite{Li2016Crowdsourcing} aims to maximize the number of labeled instances that achieve the quality requirement under a tight budget.
\cite{Liu2015Trust,Khetan2016Achieving,LahoutiH16} aim to find an adaptive task scheme to achieve the best trade-off between budget and accuracy of aggregated answers.
Other research focuses on finding optimal  task assignment which maximizes task completion rate under budget constraint or provides maximum profit for requesters.
\cite{Biswas2015A} improves the number of successful assignment by exploring and exploiting high-qualify workers with budgeted multi-armed bandit mechanism. 
\cite{Wu2017Budgeted} proposes linear programming based algorithms to maximizes the expected profit of assignment. 
As for the field of spatial crowdsourcing, certain studies on the budget constrained assignment problems have also been sparked.
 \cite{To2016Real,Zhao2016Budget} focus on the online maximum task coverage problem which aims to select a set of workers to maximize task coverage under a constraint budget.
\cite{Miao2016Balancing,YuMSL15} take workers' reputation and geometrical proximity into account, and aim to maximize the expected quality of  the assignment while staying within a limited budget.
\cite{Cheng2017Prediction} formulates the maximum quality task assignment problem with the optimization objective to maximize a global assignment quality score under a traveling budget constraint. 
The aforementioned works in spatial crowdsourcing stress on the worker selection and quality control, which differ from our optimal goal of maximizing matching size.

\textbf{Online task assignment}. The online task assignment problem in the field of spatial crowdsourcing has been studied recently. 
According to weather both of workers and tasks appear on platform dynamically, existing research can be divided into two categories: one-side and two-sides online assignment.  
\cite{Hassan2014A,Tong2016Online,She2016Conflict} focus on one-side online assignment. 
\cite{Tong2016Online} presents a comprehensive experimental comparison of representative algorithms \cite{Kalyanasundaram1991On,Bansal2014A,Meyerson2006Randomized} for online minimum bipartite matchings. 
\cite{Hassan2014A} maximizes the number of successful assignments as spatial tasks arrive in an online manner.
\cite{She2016Conflict} extents bipartite matching to social network, and solves the event-participant arrangement problem with conflicting and capacity constraint when users arrive on platform dynamically.
\cite{Tong2016Onlinemobil,Tong2017Flexible,Song2017Trichromatic} focus on two-sides online assignment.
\cite{Tong2016Onlinemobil} devotes to allocate micro-tasks to suitable crowd workers in online scenarios, where all the spatiotemporal information of micro tasks and crowd workers are unkown.
\cite{Tong2017Flexible} guides workers' movements based on the prediction of distribution of workers and tasks to optimize the online task assignment.
Besides the traditional bipartite online matching based on workers and tasks, \cite{Song2017Trichromatic} presents the trichromatic online matching in real-time spatial crowdsourcing which comprise three entities of worker, tasks and workplace. 
Our solution for BOA attaches to one-side online task assignment problem since platform acquires tasks' spatiotemporal information in advance while is unaware of workers' until they appear. Different from aforementioned one-side assignment, we consider the budget constraint and receive the guidance provided by historical data.

\section{Conclusion}
In this paper, we formally define a dynamic task assignment problem, called budget-constraint online task assignment (BOA) in real-time spatial crowdsourcing.
We first prove the optimal solution of BOA can be solved with min-cost max-flow algorithm, and then propose two greedy variants to solve the approximate solutions. The first variant named Greedy-RT, which has the competitive ratio of $\frac{1}{\lceil ln c_{max}+1\rceil+1}$, generates a random threshold to abandon those large cost pairs to reduce the abuse of budget. In order to improve the stability of Greedy-RT, we further propose another variant called Greedy-OT, which learns a near optimal threshold from historical spatiotemporal information of workers and achieves the competitive ratio of $\frac{\sum_{i=1}^N c_i n_i}{(c_{max}^*+\varepsilon)\cdot\sum_{i=1}^N n_i}$. Finally, we verify the effectiveness and efficiency of the proposed methods through extensive experiments on both synthetic and real datasets.
\nocite{*}
\bibliographystyle{splncs}
\bibliography{bib}
\end{document}